\documentclass[a4paper]{article}
\usepackage[margin=1.12in]{geometry}

\usepackage{microtype}%

\usepackage{amsmath}
\usepackage{amssymb}
\usepackage{amsthm}
\theoremstyle{definition}

\newtheorem{theorem}{Theorem}[section]

\newtheorem{lemma}[theorem]{Lemma}

\usepackage{enumerate}

\usepackage{algorithm}
\usepackage{algpseudocode} %
\algrenewcommand\textproc{\textsf}
\algnotext{EndProcedure}
\algnotext{EndIf}

\usepackage{alltt}
\usepackage{graphicx}
\usepackage{url}
\usepackage{latexsym}

\usepackage{hyperref}

\newcommand{\complexf}[1]{\mathcal{#1}}   %
\newcommand{\cplC}{\complexf{C}}   %
\newcommand{\cplD}{\complexf{D}}   %
\newcommand{\cplI}{\complexf{I}}   %
\newcommand{\cplO}{\complexf{O}}   %

\newcommand{\keyword}[1]{\emph{#1}}

\newcommand{\WScan}{\textproc{WScan}}
\newcommand{\WOScan}{\textproc{WOScan}}
\newcommand{\ISnap}{\textproc{IS}}
\newcommand{\IISnap}{\textproc{IIS}}
\newcommand{\ISnapO}{\textproc{IS'}}

\newcommand{\skel}[1]{\operatorname{skel}^{(#1)}}

\newcommand{\Star}{\operatorname{St}}

\newcommand{\join}{\mathbin{\ast}}

\newcommand{\cls}[1]{\overline{#1}}
\newcommand{\Div}[1]{{\operatorname{#1}}}

\newcommand{\Chromatic}{\Div{Ch}}
\newcommand{\Schlg}{\Div{Sch}}
\newcommand{\Sch}[2]{\Schlg^{#1}_{#2}}
\newcommand{\Carrier}{\operatorname{Carr}}
\newcommand{\pmap}[1]{\pi_{#1}}

\newcommand{\coloring}{\operatorname{color}}

\newcommand{\abs}[1]{{\mid}#1{\mid}}

\title{Schlegel Diagram and Optimizable Immediate Snapshot Protocol\thanks{%
This work is to appear in
\href{http://opodis2017.campus.ciencias.ulisboa.pt/}{%
\emph{OPODIS 2017 --- The 21st International Conference on Principles of Distributed Systems.}}}}

\author{Susumu Nishimura\\
Dept. of Mathematics,
   Graduate School of Science, Kyoto University\\ %
 \texttt{susumu@math.kyoto-u.ac.jp}}
\date{}

\begin{document}
\maketitle

\begin{abstract}
In the topological study of distributed systems,
the immediate snapshot is the fundamental computation block
for the topological characterization of wait-free solvable tasks.  %
However, in reality, the immediate snapshot is not available
as a native built-in operation on shared memory distributed systems.
Borowsky and Gafni have proposed
a wait-free multi-round protocol that implements the immediate snapshot using
more primitive operations, namely the atomic reads and writes.

In this paper,
up to an appropriate reformulation on the original protocol
by Borowsky and Gafni,
we establish a tight link between each round of the protocol
and a topological operation of subdivision using Schlegel diagram.
Due to the fact shown by Kozlov that
the standard chromatic subdivision is obtained by iterated
subdivision using Schlegel diagram,
the reformulated version is proven to compute the immediate snapshot
in a topologically smoother way.
We also show that the reformulated protocol is amenable to
optimization:
Since each round restricts the possible candidates of output
to an iteratively smaller region of finer subdivision,
each process executing the protocol can decide at an earlier round,
beyond which the same final output is reached no matter how
the remaining rounds are executed.
This reduces the number of read and write operations involved
in the overall execution of the protocol,
relieving the bottleneck of access to shared memory.
\end{abstract}

{
\interfootnotelinepenalty=10000

\renewcommand{\topfraction}{0.9}
\renewcommand{\bottomfraction}{0.9}
\renewcommand{\floatpagefraction}{0.9}

\section{Introduction}
\label{sec:intro}

The snapshot models
\cite{AttiyaRajsbaum02,BorowskyGafni:STOC93,BorowskyGafni97,Herlihy:DCTopology}
for the shared memory distributed system have been intensively studied
for the analysis of solvability of distributed tasks in the
wait-free (or more generalized failure) models.
In particular, the immediate snapshot model
\cite{AttiyaRajsbaum02,BorowskyGafni:STOC93}
and the iterated immediate snapshot mode \cite{BorowskyGafni97}
are central to the study.
The (iterated) immediate snapshot protocol is modeled by
a topological operation on simplicial complexes, namely,
the (iterated) standard chromatic subdivision.
This topological interpretation has boosted theoretical investigations
on distributed systems using simplicial complexes.
Most notably, Herlihy and Shavit have
established the asynchronous computability theorem \cite{HerlihyShavit99},
which states that a distributed task is wait-free solvable
in the asynchronous read-write shared memory model
if and only if the task is expressed by a suitable pair of
an iterated standard chromatic subdivision and a simplicial map.

Although immediate snapshots are not natively supported in real distributed systems,
the protocol proposed by
Borowsky and Gafni \cite{BorowskyGafni:PODC93} provides a wait-free implementation
of it on asynchronous shared memory systems with atomic reads and writes.
Their protocol is a multi-round protocol, where
each individual process in a distributed system consisting of $n+1$ processes
decides its snapshot only if it witnesses, for $k$-th round,
$n+2-k$ different processes (including the process itself) that have written to shared memory.

In the present paper, we give yet another multi-round protocol
for the immediate snapshot, reformulating the one
by Borowsky and Gafni. Though both protocols work equally,
the reformulated version has
notable advantages over the original one.
\begin{enumerate}
  \item The reformulated multi-round protocol induces a neat correspondence
  of each individual round with a topological construction,
  namely a subdivision using Schlegel diagram.

  Benavides and Rajsbaum \cite{BenavidesRajsbaum16} showed that
  each round of the protocol by Borowsky and Gafni does not simply subdivides
  the input complex but it produces an intermediate protocol complex
  that is not a (pseudo)manifold.
  They make use of `collapsing' to describe how the protocol complex is
  transformed topologically at each protocol round, concluding that
  the standard chromatic subdivision is obtained as the final result.
  In contrast, the present paper shows that, up to reformulation,
  each protocol round exactly corresponds to a subdivision using Schlegel diagram.
  This series of corresponding subdivisions
  gives rise to the standard chromatic subdivision,
  due to a straightforward topological argument
  by Kozlov \cite{Kozlov12}.

  \item Due to the simpler topological structure,
  the reformulated version is amenable to mechanical optimization.

  In shared memory systems,
  shared memory access is a major bottleneck.
  Processes share a single shared memory module,
  which serializes simultaneous requests
  to handle them one at a time.
  The above mentioned multi-round protocols for the immediate snapshot
  involve read and write requests multiplied by the number of rounds to
  be performed (and also by the number of processes). The multiplied
  requests to the shared memory thus can cause
  performance degradation due to memory contention.

  For each concrete implementation of a protocol that makes use of the
  immediate snapshot,
  the reformulated version of the immediate snapshot protocol can be optimized
  to issue a lesser number of memory requests so that each process decides
  its output value at an earlier round, beyond which any execution path
  converges to a single unique output.
  This refinement is possible because the reformulated protocol
  iteratively subdivides the protocol complex at each round,
  narrowing down the candidates of output
  to those in a smaller region of finer subdivision.
  Thus the output can be decided as soon as the possible outputs
  have been narrowed down to a singleton set.
  This optimizes the protocol to perform a lesser number of shared memory access,
  where the optimized code can be mechanically derived
  for each concrete instance.

  With this optimization technique, the author believes that
  the immediate snapshot model, which
  has been studied mostly of theoretical concern,
  can also serve as a fundamental construct for wait-free distributed programming
  in a more practical context.

\end{enumerate}

\paragraph*{Related Work.}
In \cite{Kozlov12}, Kozlov has proven that
the standard chromatic subdivision is indeed a subdivision
by showing the standard chromatic subdivision is
obtained by the iterated subdivision using Schlegel diagram.
He also argued that the transient complexes
that appear in the intermediate steps of iterated subdivision
can be given computational interpretation,
but his interpretation combines atomic writes with
immediate scan operations.
In contrast, the present paper gives the computational interpretation
solely with atomic reads and writes, establishing
the exact correspondence of each round of the reformulated
immediate snapshot protocol
with subdivision using Schlegel diagram.

Benavides and Rajsbaum \cite{BenavidesRajsbaum16}
studied the topological structure induced by the multi-round
immediate snapshot protocol of Borowsky and Gafni.
They observed that the series of shared memory reads and writes
involved in each single round of the protocol
generates a protocol complex that augments
the standard chromatic subdivision with extra simplexes.
Due to those extras, the protocol complexes are neither a subdivision of
the input complex nor a (pseudo)manifold in general.
They analyzed the topological structure of the
protocol complexes in detail and presented a topological model,
in which subsequent rounds of the protocol collapse those extra simplexes
to end up with the standard chromatic subdivision.
This collapsing series of complexes, though topologically insightful,
does not explain well the correspondence to the underlying operational (execution) model.
Assuming the colored simplicial topological model, where
every simplex is a collection of vertexes of distinct colors (process ids),
there can be no operational counterpart to collapsing of a simplex:
Since a simplex is always collapsed to a degenerate simplex of strictly smaller
dimension (i.e., of fewer colors), it absurdly indicates that
such an operation would shrink a process group into a strictly smaller one
(even if no process in the group crashed).
In the present paper, we reformulate the multi-round protocol
by Borowsky and Gafni so that
each round is precisely a subdivision using Schlegel diagram,
providing a simpler topological model.
The neat correspondence between the topological model and the operational model
also allows us to optimize the protocol for reduced shared memory access.
(See Section~\ref{sec:optimiz}.)

Hoest and Shavit \cite{HoestShavit06} proposed
the nonuniform iterated immediate snapshot model,
a refinement of the immediate snapshot model, for the purpose of
a precise analysis of protocol complexity.
The nonuniform iterated immediate snapshot
topologically corresponds to the iterated nonuniform chromatic subdivision,
which generalizes the iterated standard chromatic subdivision so that
individual simplexes are allowed to be subdivided different numbers of times.
Their nonuniform model may also be applied to protocol optimization. That is,
a task solvable by a protocol in the uniform model would be substituted
by a protocol in the nonuniform model that solves the same task with
a coarser iterated nonuniform subdivision.
However, it seems nontrivial in general to find
an optimal nonuniform protocol.
The reformulated multi-round protocol in the present paper,
due to the smooth correspondence between the topological and operational models,
allows mechanical optimization on
any protocol given in the uniform model.

\paragraph*{Outline.}
The rest of the paper is organized as follows.
Section~\ref{sec:prelim} describes the shared memory
distributed computing model and the wait-free multi-round protocol by Borowsky
and Gafni and reviews the topological theory concerning wait-free solvability
of distributed tasks.
Section~\ref{sec:schlegel} proposes a reformulation
of the multi-round protocol by Borowsky and Gafni.
We prove the reformulated protocol also computes
the immediate snapshot, showing that each round of the reformulated protocol
precisely corresponds to a subdivision using Schlegel diagram.
In Section~\ref{sec:optimiz}, we further argue
that the reformulated protocol is amenable to optimization.
We show that, for each concrete protocol that solves a task
using the iterated immediate snapshot,
the protocol can be optimized to decide the final output
at an earlier round, as soon as the
collection of possible outputs to be reached
is narrowed down to a singleton set.
Finally, Section~\ref{sec:conclusion} concludes the paper.

\section{Distributed Computing and Topological Model}
\label{sec:prelim}

Throughout the paper we consider a distributed system of
$n+1$ faulty processes, which have process ids numbered
$0$ through $n$. We assume the asynchronous read-write shared memory model,
where the distributed system has a shared memory
consisting of single-writer, multi-reader atomic registers
and processes can communicate solely by atomic reads and writes
on these registers.
We further assume that each process $i$ receives
its initial private input value through the variable $v_i$,
which is local to the process.

The rest of this section is devoted to give
an overview of the immediate snapshot model
and the basics of combinatorial topology
related to the topological theory of wait-free solvability
of distributed tasks.
For a more complete exposition on the subject,
see \cite{Herlihy:DCTopology,Kozlov:CombiAlgTopology}.

\subsection{Immediate snapshot in the read-write model}
\label{subsec:ISprotocol}

Borowsky and Gafni proposed the wait-free implementation
of the immediate snapshot in the read-write shared memory model \cite{BorowskyGafni:PODC93}.
Algorithm~\ref{algo:IS} gives
the Borowsky-Gafni protocol in a recursive style implementation
by Gafni and Rajsbaum \cite{GafniRajsbaum10}.
The protocol is a multi-round protocol,
where the series of recursive calls $\ISnap(n)$,
$\ISnap(n-1)$, \ldots, $\ISnap(0)$ correspond to $n+1$ multiple rounds
and the processes communicate through a series of $n+1$
shared memory arrays $\mathit{mem}_d$ ($n\geq d\geq 0$),
each of which consists of
$n+1$ registers indexed $0$ through $n$.
Each process $i$,
per each recursive call $\ISnap(d)$ for the $(n+1-d)$-th round,
invokes the \keyword{write\&scan} operation $\WScan$
on the shared memory array $\mathit{mem}_d$.
When $\WScan$ is invoked by process $i$,
it first writes the private value $v_i$ of the process $i$
to the register $\mathit{mem}_d[i]$,
then scans the view, i.e.,
the set of values that have been written in the array $\mathit{mem}_d$,
and returns the view paired with the process id.
The view is gathered by $\textproc{collect}$,
which reads the registers of the array in an unspecified order.
If process $i$ witnesses $d+1$ values in the view, it returns
the pair of process id and the view as the result of immediate snapshot,
terminating recursion;
Otherwise, it continues recursive call $\ISnap(d-1)$ for another round.

We notice that the view returned by $\WScan$ per each round is simply discarded,
when not sufficiently many values are witnessed.

\subsection{Simplicial complexes and subdivisions}
\label{subsec:simpcpl}

Let $V$ be a set of vertexes.
A \keyword{simplex} $\sigma$ is a finite subset of $V$.
The dimension of $\sigma$, denoted by $\dim(\sigma)$, is given by $\abs{\sigma} -1$.
A simplex $\sigma$ of dimension $d$ is called a
\keyword{$d$-simplex}. %
In particular, the empty simplex $\emptyset$ is a $(-1)$-simplex.
A simplex $\sigma$ is called a \keyword{face}
of $\tau$ if $\sigma\subseteq \tau$
and is particularly called a \keyword{proper} face
if the inclusion is strict.

A \keyword{simplicial complex} (or a \keyword{complex} for short) $\complexf{C}$
is a set of %
simplexes such that %
$\sigma\in \complexf{C}$ and $\tau\subseteq \sigma$
implies $\tau\in \complexf{C}$.
The dimension of $\complexf{C}$, written $\dim(\complexf{C})$,
is the maximum dimension of simplexes contained in $\complexf{C}$.
A complex of dimension $d$ is also called a \keyword{$d$-complex}.
We write $V(\complexf{C})$ for the set of vertexes contained in $\complexf{C}$.
A complex $\complexf{D}$ is called a \keyword{subcomplex} of $\complexf{C}$, if
$\complexf{D}\subseteq \complexf{C}$.
The \keyword{$k$-skeleton} of a complex $\complexf{C}$,
written $\skel{k}\complexf{C}$, is the maximum
subcomplex of $\complexf{C}$ of dimension $k$ or less,
namely, $\skel{k}\complexf{C} =
\{\sigma \in\complexf{C}\mid \dim(\sigma)\leq k \}$.

\begin{algorithm}[t]
  \caption{Multi-round write\&scan code for the process $i$}
  \label{algo:IS}
  \begin{algorithmic}
    \Procedure{WScan}{$d$}
    \State $mem_d[i] \gets v_i$
    \State $view \gets \Call{collect}{mem_d}$
    \State \Return $(i,view)$
    \EndProcedure
  \end{algorithmic}
  \begin{algorithmic}
    \Procedure{IS}{$d$}
    \State $(i,view) \gets \Call{WScan}{d}$
    \If{$\abs{view}= d+1$}
    \Return $(i,view)$
    \Else~\Call{IS}{$d-1$}
    \EndIf
    \EndProcedure
  \end{algorithmic}
\end{algorithm}

A \keyword{facet} $\sigma$ of $\complexf{C}$ is a maximal simplex, i.e.,
$\sigma$ is not a proper face of any $\tau\in\complexf{C}$.
We write $\cls{\sigma}$ for a complex whose sole facet is the simplex $\sigma$,
i.e.,
$\cls{\sigma} = \{ \tau \mid \tau \subseteq \sigma \}$.
A complex is called \keyword{pure},
if all its facets have the same dimension.

Whenever $\sigma\cap\tau = \emptyset$,
we write $\sigma\join\tau$ for
the \keyword{join} of the simplexes $\sigma$ and $\tau$,
namely the union $\sigma\cup\tau$.
Likewise, whenever
$V(\cplC)\cap V(\cplD)=\emptyset$,
we define the \keyword{join} of $\complexf{C}$ and $\complexf{D}$
by $\cplC\join\cplD = \{\sigma\join\tau \mid \sigma\in\cplC, \tau\in\cplD\}$.
The \keyword{star} of a vertex $v\in\cplC$ is
defined by $\Star(v,\complexf{C}) = \{\tau\in\complexf{C}
\mid \{v\}\cup\tau\in \complexf{C}\}$,
which is the maximum subcomplex of $\cplC$ whose every facet contains $v$.

Throughout the paper, complexes are assumed to be pure.
Also, we solely consider the so-called \keyword{chromatic} simplexes
and complexes. Suppose we have a \keyword{coloring function}
$\coloring: V \to \{0,\ldots, n\}$. A simplex $\sigma$ is chromatic
if $\coloring(v)=\coloring(v')$ implies $v=v'$ for every $v,v'\in\sigma$;
A complex $\cplC$ is chromatic if every simplex $\sigma\in\cplC$ is chromatic.

A \keyword{simplicial map} is a
total vertex map $\mu: V(\complexf{C}) \to V(\complexf{D})$
such that $\mu(\sigma)\in \complexf{D}$ for every $\sigma\in\complexf{C}$.
Every simplicial map must be \keyword{color-preserving},
i.e., $\coloring(v)=\coloring(\mu(v))$ for every
$v\in V(\complexf{C})$.

A \keyword{subdivision} of a complex $\complexf{C}$,
written $\Div{Div}\,\complexf{C}$,
is a finer complex obtained by
dividing each simplex of the complex into smaller pieces
of (chromatic) simplexes.
(In this paper, though we occasionally refer to geometric presentation
of subdivisions,
we are solely concerned with combinatorial definition
of subdivision in formality. For example,
$\cls{\sigma}$ should be understood as the trivial subdivision,
which does not geometrically refine $\sigma$ at all.)
For a simplex $\tau\in\Div{Div}\,\complexf{C}$,
we write $\Carrier(\tau,\complexf{C})$
for the \keyword{carrier} of $\tau$, namely,
the smallest simplex $\sigma\in\complexf{C}$ such that
$\tau\in \Div{Div}\,\sigma$.

A subdivision induces a \keyword{parent map} $\pmap{}$, which
carries each vertex of the subdivision to the unique vertex
$\pmap{}(v)$ of matching color in its carrier.
That is, for $v\in V(\Div{Div}\,\cplC)$, $\pmap{}(v)$ is the unique vertex
of $\cplC$ such that
$\pmap{}(v)\in\Carrier(\{ v\},\cplC)$ and
$\coloring(\pmap{}(v))=\coloring(v)$.
The parent map $\pmap{}: V(\Div{Div}\,\cplC) \to V(\cplC)$ is
a color-preserving simplicial map.

\subsection{Distributed task and asynchronous computability}
\label{subsec:topmethods}

A (colored) \keyword{task} for a distributed system with $n+1$ faulty processes
is defined by a triple $(\cplI, \cplO, \Phi)$,
where $\cplI$ is an input complex (of dimension $n$),
$\cplO$ is an output complex (of dimension $n$),
and $\Phi: \cplI \to 2^{\cplO}$ is a \keyword{carrier map},
a monotonic function that maps every input simplex $\sigma\in\cplI$
to an output subcomplex $\Phi(\sigma)\subseteq\cplO$ such that
$\coloring(\sigma)=\bigcup\{\coloring(\tau)\mid\tau\in\Phi(\sigma)\}$.

\begin{figure}[t]
  \qquad\includegraphics[scale=0.3]{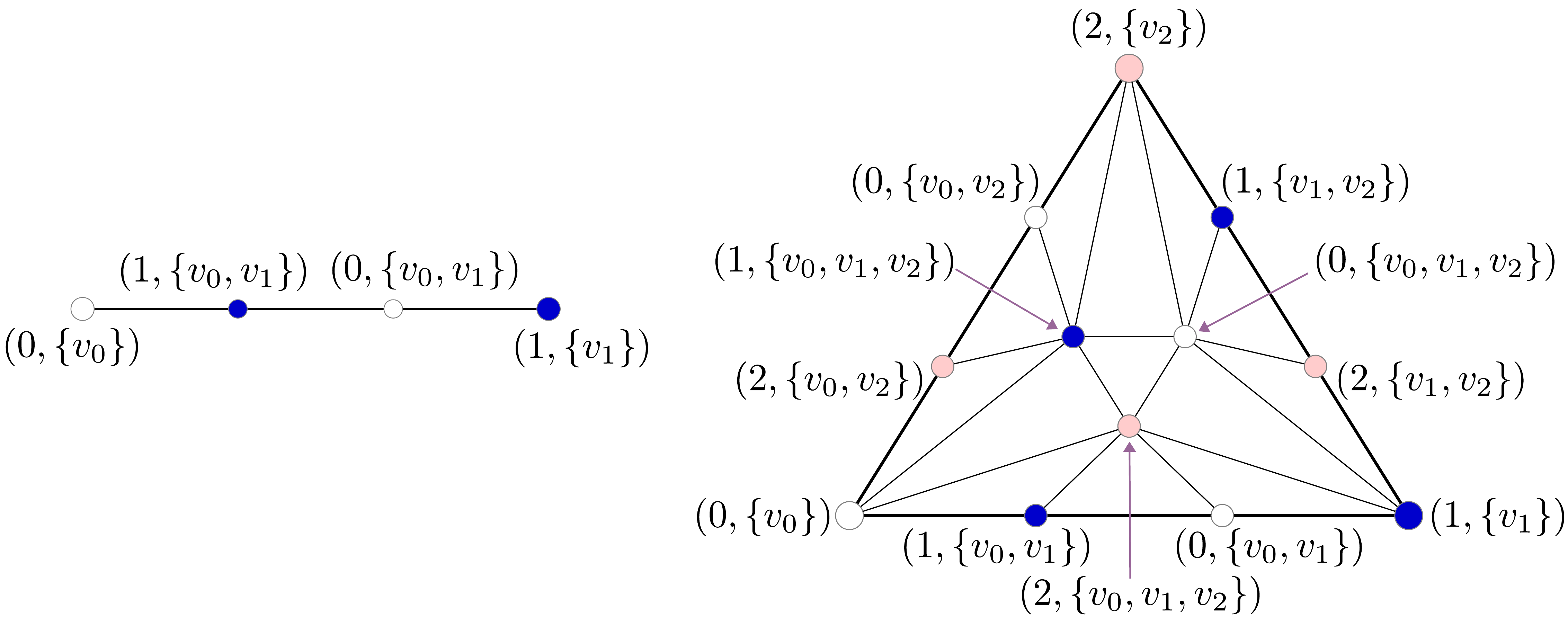}
  \caption{The standard chromatic subdivisions on 1-simplex and 2-simplex}
  \label{fig:Chsubdiv}
\end{figure}

The asynchronous computability theorem \cite{HerlihyShavit99}
gives a topological characterization for the class of
tasks that are wait-free solvable in the shared memory read-write model.
The primary topological tool employed in the
theorem is subdivision on simplicial complexes, especially
the \keyword{standard chromatic subdivision}.
See Figure~\ref{fig:Chsubdiv} for geometric presentation
of the standard chromatic subdivision on 1-simplex $\{v_0,v_1\}$
and 2-simplex $\{v_0,v_1,v_2\}$:
The standard chromatic subdivision refines each $d$-simplex $\sigma$
($d>0$) by introducing new $d+1$ vertexes (of different colors)
at antiprismatic positions displaced from
the barycenter of the simplex.
Combinatorially, as indicated in Figure~\ref{fig:Chsubdiv},
each vertex of color $i$ contained
in the subdivision is designated by a pair $(i,\sigma)$, where
$\sigma$ is a nonempty subset of the original simplex and
$i\in\coloring(\sigma)$.
(Herein and after we will depict the vertexes for different processes
with colors white, dark blue, and light red to indicate
processes of id 0, 1, and 2, respectively.)
Furthermore, a set $\{(i_0,\tau_0),\ldots,(i_d,\tau_d)\}$ of vertexes
of distinct colors forms a $d$-simplex of the subdivision if and only if
the subsets of the original simplex are linearly ordered
(as $\tau_0\subseteq\cdots\subseteq\tau_d$, up to appropriate permutation).

In what follows, let us write $\Chromatic\,\cplC$ for the standard
chromatic subdivision applied to every simplex in $\cplC$
and also write $\Chromatic^m\cplC$ ($m\geq 0$) for the $m$-iterated application
of $\Chromatic$ on $\cplC$.

The multi-round protocol for the immediate snapshot (Algorithm~\ref{algo:IS})
in effect implements the standard chromatic subdivision \cite{BenavidesRajsbaum16}:
Every non-faulty process $i$ executing the protocol returns
a vertex $(i,\tau)$ of the subdivision.
\begin{theorem}[Theorem~5.29 and Corollary~5.31 of \cite{HerlihyShavit99}]
  A colored task $(\cplI,\cplO,\Phi)$ has a wait-free protocol
  in the asynchronous shared memory read-write model
  if and only if there exists a subdivision $\Div{Div}\,\cplI$ and
  a \keyword{decision map} $\delta$, which is
  a color-preserving simplicial map $\delta:V(\Div{Div}\,\cplI) \to V(\cplO)$
  such that $\delta(\sigma)\in\Phi(\Carrier(\sigma,\cplI))$ for
  every $\sigma\in\Div{Div}\,\cplI$.

  In particular, the subdivision $\Div{Div}$ can be taken
  $\Chromatic^K$ for some $K\geq 0$.
\end{theorem}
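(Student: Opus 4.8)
The statement is the asynchronous computability theorem, so the plan is to reduce both directions to the topological behaviour of the iterated immediate snapshot (IIS) model, whose protocol complex after $K$ rounds is exactly $\Chromatic^K\cplI$ --- the fact that is precisely the topological interpretation of Algorithm~\ref{algo:IS} recalled just above. The two implications then read: (sufficiency) from a subdivision together with a decision map, synthesize a wait-free read-write protocol; and (necessity) from a wait-free protocol, extract such a subdivision and decision map. In both cases $\Chromatic^K\cplI$ plays the role of the canonical protocol complex, which also yields the sharpened ``in particular'' clause allowing $\Div{Div}=\Chromatic^K$.

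For the \emph{if} direction, suppose we are given a subdivision $\Div{Div}\,\cplI$ and a color-preserving simplicial map $\delta:V(\Div{Div}\,\cplI)\to V(\cplO)$ with $\delta(\sigma)\in\Phi(\Carrier(\sigma,\cplI))$ for every $\sigma\in\Div{Div}\,\cplI$. First I would invoke a chromatic simplicial approximation result to choose $K$ so large that $\Chromatic^K\cplI$ refines $\Div{Div}\,\cplI$, obtaining a color-preserving simplicial map $\phi:V(\Chromatic^K\cplI)\to V(\Div{Div}\,\cplI)$ that respects carriers in the sense $\Carrier(\phi(\sigma),\cplI)\subseteq\Carrier(\sigma,\cplI)$. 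The protocol then runs the IIS protocol for $K$ rounds --- realizing the protocol complex $\Chromatic^K\cplI$ so that each non-faulty process computes its own vertex --- and finally each process decides $\delta(\phi(\cdot))$ on that vertex. Wait-freedom is inherited from the wait-freedom of the individual rounds, and correctness follows because for the simplex $\sigma\in\Chromatic^K\cplI$ produced in any execution one has $\delta(\phi(\sigma))\in\Phi(\Carrier(\phi(\sigma),\cplI))\subseteq\Phi(\Carrier(\sigma,\cplI))$ by the carrier containment and monotonicity of $\Phi$, so the decided output lies in the admissible set for the corresponding input.

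For the \emph{only if} direction, suppose the task is solved by some wait-free read-write protocol. I would associate to it its protocol complex $\complexf{P}$, whose vertexes are the possible final local states and whose simplexes are the states jointly reachable in a single execution, together with the protocol's decision function restricting to a color-preserving simplicial map $\complexf{P}\to\cplO$ that is compatible with $\Phi$ along the execution carrier map. The key reduction is that any wait-free read-write protocol has the same solving power as an IIS protocol; concretely, $\complexf{P}$ admits a carrier-respecting color-preserving simplicial map out of $\Chromatic^K\cplI$ for some $K$. Precomposing the decision map with this map produces a decision map defined on $\Chromatic^K\cplI$ satisfying the required carrier condition, and taking $\Div{Div}=\Chromatic^K$ delivers both the subdivision and the strengthened final clause.

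The main obstacle is the necessity direction, and within it the claim that an arbitrary wait-free read-write protocol is captured by an iterated standard chromatic subdivision. This rests on two substantial ingredients that I would import rather than derive from the material above: the \emph{universality} of the iterated immediate snapshot model (that immediate snapshots can simulate, and be simulated by, general asynchronous reads and writes without loss of solvability, via a BG-style simulation), and a \emph{chromatic simplicial approximation theorem} guaranteeing that every subdivision of $\cplI$ is refined by $\Chromatic^K\cplI$ for large $K$, with a carrier-respecting approximating map. The sufficiency direction is comparatively routine once the IIS protocol complex is identified with $\Chromatic^K\cplI$; the genuine difficulty, and the historical depth of the theorem, lies in the topological control of what arbitrary asynchronous executions can produce.
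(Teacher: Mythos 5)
This theorem is not proved in the paper at all: it is imported verbatim as Theorem~5.29 and Corollary~5.31 of Herlihy and Shavit \cite{HerlihyShavit99}, so there is no in-paper argument to compare against. Judged on its own, your sketch reproduces the standard architecture of the Herlihy--Shavit proof correctly: sufficiency by running the iterated immediate snapshot for $K$ rounds (protocol complex $\Chromatic^K\cplI$) and composing $\delta$ with a carrier-respecting chromatic simplicial approximation $\phi$, with validity following from $\Carrier(\phi(\sigma),\cplI)\subseteq\Carrier(\sigma,\cplI)$ and monotonicity of $\Phi$; necessity by passing to the protocol complex of an arbitrary wait-free read-write protocol and mapping $\Chromatic^K\cplI$ into it. You are also right to isolate, and honest about deferring, the two genuinely hard ingredients --- the chromatic simplicial approximation theorem and the universality of the IIS model relative to general read-write protocols (the span/simulation argument). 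As a consequence your proposal is an accurate outline rather than a proof: everything of substance in the necessity direction is delegated to those imported results, which is exactly the posture the paper itself takes by citing \cite{HerlihyShavit99}. No step in the outline is wrong; just be aware that the carrier-compatibility of the approximation map and the color-preservation constraint are where the cited proofs do their real work, and they cannot be obtained from the material developed in this paper (Lemmas~3.2--3.3 only give the forward, protocol-to-subdivision correspondence for the specific reformulated protocol).
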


\section{Immediate Snapshot as Iterated Subdivision Using Schlegel Diagram}
\label{sec:schlegel}

\begin{figure}[t]
  \quad\qquad\includegraphics[scale=0.3]{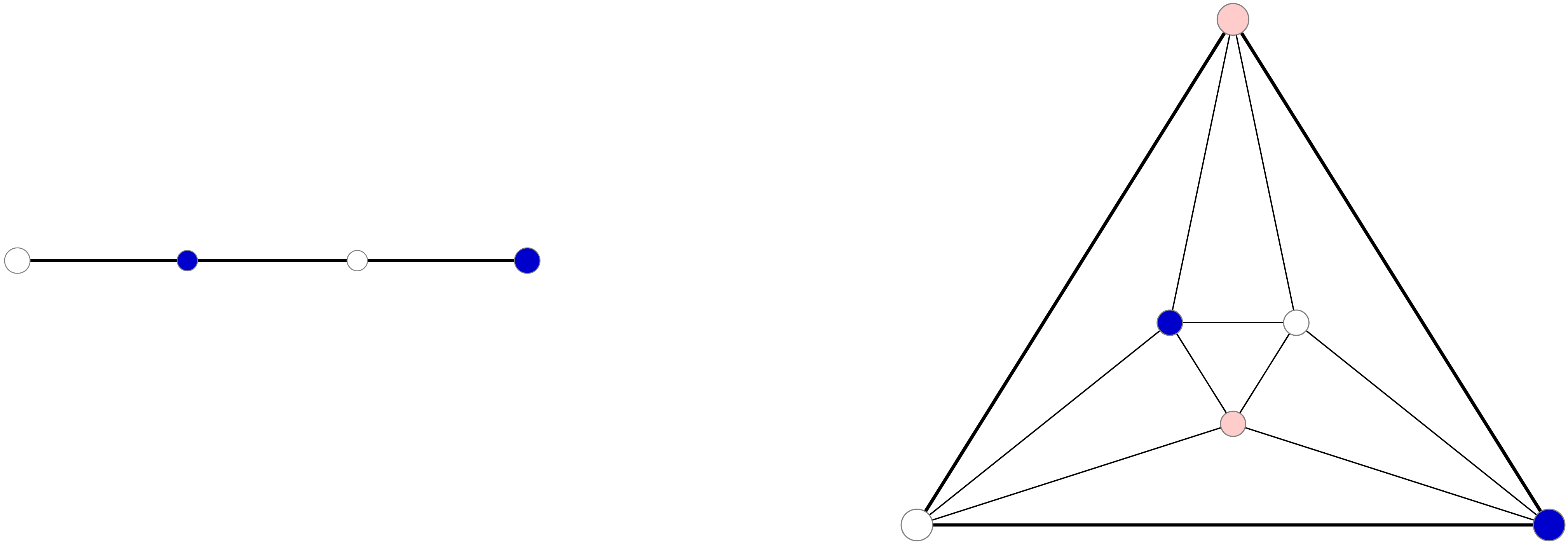}
  \caption{Schlegel diagrams for $1$-simplex and $2$-simplex}
  \label{fig:sch}
\end{figure}

This section presents a multi-round protocol for the immediate snapshot,
by reformulating the protocol by
Borowsky and Gafni \cite{BorowskyGafni:PODC93}.
We show that each round of the protocol
corresponds exactly to a subdivision using Schlegel diagram.

\subsection{Schlegel diagram and subdivision}
\label{subsec:SchSubdiv}

A \keyword{Schlegel diagram} is a projection of
a polytope onto one of its facets \cite{Ziegler95}.
In the present paper,
we are solely concerned with Schlegel diagrams on cross-polytopes.
The Schlegel diagram on $(d+1)$-dimensional cross-polytope,
which consists of $2(d+1)$ vertexes, gives a subdivision of $d$-simplex.
Figure~\ref{fig:sch} shows subdivisions of $1$-simplex and $2$-simplex
by Schlegel diagrams, where
the former is derived from the quadrilateral %
and the latter from the octahedron.
(Note that the standard chromatic subdivision is a refinement of
Schlegel diagram in general, but they coincide for $1$-simplexes.)

The complex of Schlegel diagram that subdivides
a $d$-simplex $\sigma=\{v_0, \ldots, v_d \}$,
denoted by $\Sch{}{d}\,\sigma$, is formally defined
as follows:
\begin{align*}
\Sch{}{d}\,\sigma = &
\bigcup\bigl\{ \cls{\{v_i\mid i\in \sigma\setminus I\}\join\{ (i,\sigma) \mid i\in I\}} \mid
\emptyset \subsetneq I\subseteq \{0,\ldots,d\}
\bigr\},
\end{align*}
where each $(i,\sigma)$ is a new vertex introduced for subdivision,
with coloring $\coloring((i,\sigma))=i$.
Geometrically, the Schlegel diagram subdivides a $d$-simplex
$\sigma$ into smaller facets, namely,
the central facet $\{(0,\sigma),\ldots,(d,\sigma)\}$, which is solely comprised
of the new vertexes, and other facets, each of which
shares a lower dimensional simplex with the central facet.
(See Figure~\ref{fig:ChFromSchlegel} for the subdivision of 2-simplex.)
Notice that $\sigma$ is no more a face of Schlegel subdivision
$\Sch{}{d}\,\sigma$, as it does not share any simplex with the central facet.
When $\dim(\sigma)\neq d$, we define $\Sch{}{d}\,\sigma$ by a trivial
subdivision, i.e., $\Sch{}{d}\,\sigma = \cls{\sigma}$.

\begin{figure}[t]
  \includegraphics[scale=0.43]{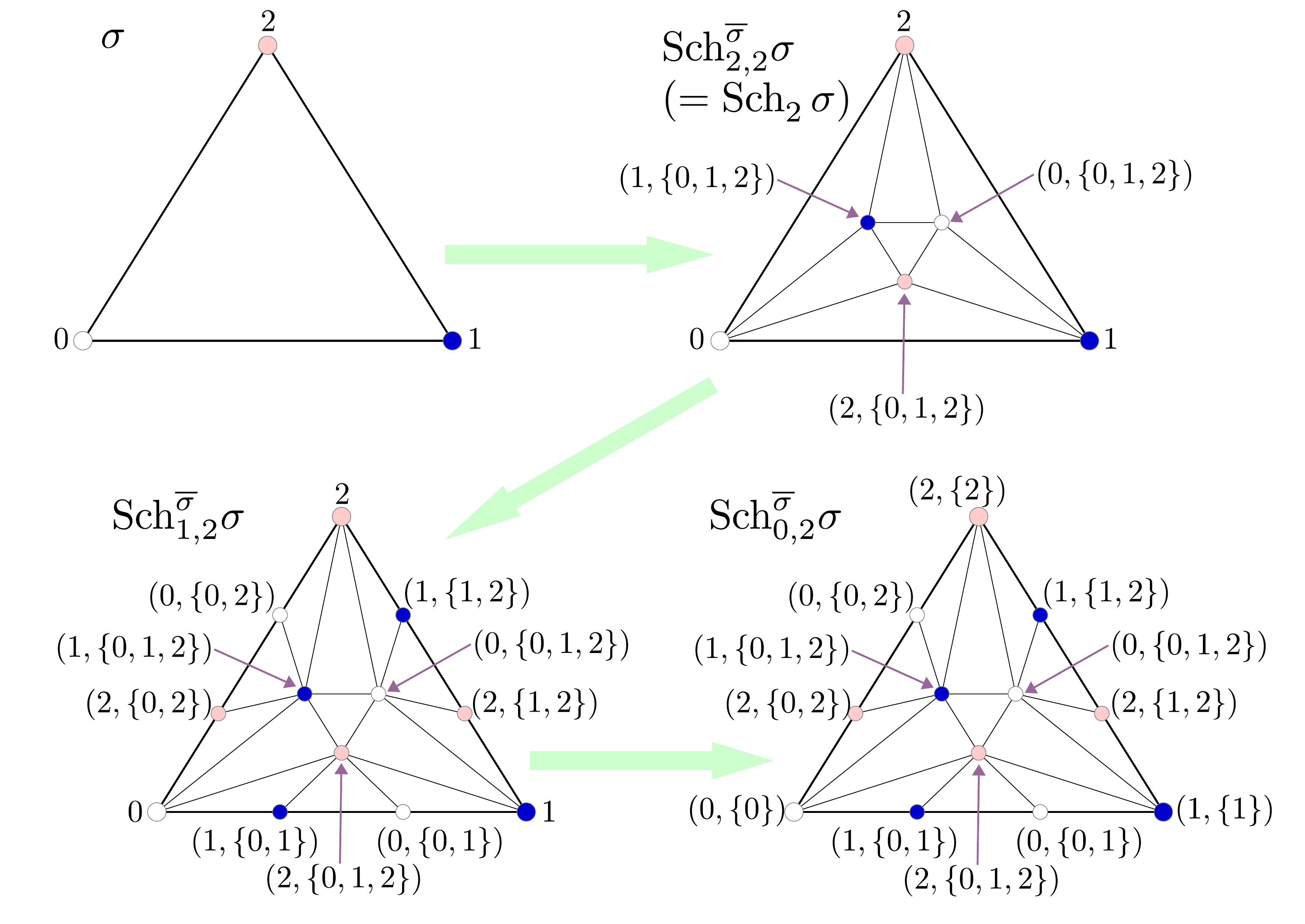}
  \caption{Standard chromatic subdivision on $\sigma=\{0,1,2\}$ using Schlegel diagrams}
  \label{fig:ChFromSchlegel}
\end{figure}

Kozlov \cite{Kozlov12} has shown that
the standard chromatic subdivision can be obtained by
a series of Schlegel diagram that subdivides simplexes in the order
of decreasing dimension.
To put it formal, let $\cplD$ be any subdivision of a $d$-complex $\cplC$
such that $\cplC\cap \cplD= \skel{k}\cplC$, meaning that
$\cplD$ subdivides simplexes of $\cplC$ up to dimension
$k+1$ and higher but no simplexes of lower dimension.
Let us write $\Sch{\cplC}{k}\cplD$ for the subdivision of $\cplD$ applied to
every $k$-simplex of $\cplC\cap \cplD$ by Schlegel diagram, namely,
\[
\Sch{\cplC}{k}\cplD = \{
\tau\join\sigma' \mid
\tau\join\sigma\in\cplD \text{ and }
\sigma'\in\Sch{}{k}\,\sigma
\text{ for some } \sigma\in\cplC\cap\cplD\}.
\]
Let us also write $\Sch{\cplC}{h,j}$
for the composition $\Sch{\cplC}{h}\circ\Sch{\cplC}{h+1}\circ\cdots\circ\Sch{\cplC}{j}$
($0\leq h, j\leq d$) of subdivisions on simplexes in the order of decreasing dimension.
(When $h>j$, $\Sch{\cplC}{h,j}$ denotes the trivial subdivision.)

\begin{theorem}[\cite{Kozlov12}] \label{th:ChByIteratedSchlegel}
  For any pure complex $\cplC$ of dimension $d$,
  $\Chromatic\,\cplC = \Sch{\cplC}{0,d}\,\cplC$.
\end{theorem}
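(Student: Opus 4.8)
The plan is to prove the single-simplex identity $\Chromatic\,\cls{\sigma}=\Sch{\cls{\sigma}}{0,d}\,\cls{\sigma}$ for a $d$-simplex $\sigma$ by induction on $d$, and then to glue over the whole complex. Gluing is justified by \emph{naturality}: every new vertex created along the way is a pair $(i,\mu)$ whose label depends only on the face $\mu$ being subdivided (its ``view'') and not on any ambient simplex, so the two subdivisions restrict to the \emph{same} complex on any face shared by two top simplexes; together with the fact that the definition of $\Sch{\cplC}{k}$ only ever touches the surviving faces recorded by $\cplC\cap\cplD$, this reduces the theorem to the case $\cplC=\cls{\sigma}$. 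On the combinatorial side I would first record the model of the chromatic facets recalled in Section~\ref{subsec:topmethods}: a facet of $\Chromatic\,\cls{\sigma}$ is the datum of one vertex $(i,\tau_i)$ per colour $i\in\coloring(\sigma)$ whose views $\{\tau_i\}$ are linearly ordered by inclusion; equivalently it is an ordered set partition $(B_1,\dots,B_m)$ of $\coloring(\sigma)$, where a colour $i$ lying in block $B_\ell$ receives the view $B_1\cup\cdots\cup B_\ell$.

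For the inductive step I would apply the first operation $\Sch{\cls{\sigma}}{d}$ alone. Its facets are indexed by the nonempty subsets $I\subseteq\coloring(\sigma)$: the facet for $I$ is the join of the inert new vertexes $\{(i,\sigma)\mid i\in I\}$ with the \emph{remaining face} $\mu_I=\{v_j\mid j\notin I\}$, a proper $(d-\abs{I})$-face of $\sigma$. The key lemma to isolate is that, after this one operation, $\cls{\sigma}\cap\Sch{\cls{\sigma}}{d}\,\cls{\sigma}=\skel{d-1}\cls{\sigma}$ and, moreover, inside the facet for $I$ the only original faces that survive are the subfaces of $\mu_I$. Granting this, the remaining operations $\Sch{\cls{\sigma}}{0,d-1}$ leave the new vertexes untouched and act on each block $\{(i,\sigma)\mid i\in I\}\join\cls{\mu_I}$ by distributing over the join (this is exactly the shape $\tau\join\sigma'$ in the definition of $\Sch{\cplC}{k}$) and subdividing $\mu_I$ by the iterated Schlegel diagram of $\mu_I$. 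Since $\dim\mu_I<d$, the induction hypothesis identifies this with $\Chromatic\,\cls{\mu_I}$, so the facet for $I$ expands to $\{(i,\sigma)\mid i\in I\}\join\Chromatic\,\cls{\mu_I}$.

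It then remains to match these with the chromatic facets. Using that a chromatic vertex $(i,\tau)$ is labelled by its view alone, the facets of the form $\{(i,\sigma)\mid i\in I\}\join\Chromatic\,\cls{\mu_I}$ are precisely those chromatic facets of $\sigma$ whose \emph{last} block is $I$ (the processes of $I$ hold the full view $\sigma$, while every other process holds a view contained in $\mu_I$). As $I$ ranges over the nonempty subsets of $\coloring(\sigma)$, the last block runs over all possibilities exactly once, so the union reproduces $\Chromatic\,\cls{\sigma}$ with each $d$-facet counted once; since both sides are complexes determined by their facets, this is the desired identity at dimension $d$. The base case $d=0$ is the bookkeeping remark that $\Sch{}{0}\,\cls{\{v_i\}}=\cls{\{(i,\{v_i\})\}}$, relabelling the corner $v_i$ as the chromatic vertex $(i,\{v_i\})$.

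The step I expect to be the real obstacle is the key lemma controlling the iteration, namely that once a facet of $\Sch{\cls{\sigma}}{d}\,\cls{\sigma}$ is fixed, the only original face it still contains is its remaining face $\mu_I$, so that the later operations act on a \emph{single} lower-dimensional simplex and genuinely distribute over the join with the inert new vertexes. This is what guarantees that the different blocks do not interfere and that each operation $\Sch{\cls{\sigma}}{k}$ meets exactly the $k$-faces produced by the earlier peels; verifying it (and feeding the analogous statement one dimension down into the induction) is where the care lies. A secondary, more routine point is the global gluing: checking that the $\cplC\cap\cplD$ clause in the definition of $\Sch{\cplC}{k}$ hands each operation precisely the faces that the single-simplex analysis predicts, and that naturality of the labels makes the per-simplex identities agree on shared faces.
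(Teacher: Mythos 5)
The paper offers no proof of this statement to compare against: Theorem~\ref{th:ChByIteratedSchlegel} is imported verbatim from Kozlov \cite{Kozlov12} and used as a black box. Judged on its own terms, your outline is sound and is in substance the standard argument (Kozlov's own): peel off the top-dimensional step $\Sch{\cls{\sigma}}{d}$, index its facets by the nonempty sets $I$ as $\{(i,\sigma)\mid i\in I\}\join\cls{\mu_I}$, note that the original faces surviving inside the block for $I$ are exactly the faces of $\mu_I$ (so the later operations distribute over the join and recurse on $\cls{\mu_I}$), and then match $\{(i,\sigma)\mid i\in I\}\join\Chromatic\,\cls{\mu_I}$ with the chromatic facets whose last block is $I$; since the last block of an ordered partition is uniquely determined, the union over $I$ hits every facet exactly once. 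One substantive caution: in the facet-matching step you must work with the ordered-set-partition description of the facets of $\Chromatic\,\cls{\sigma}$, as you do, and not with the weaker characterization printed in Section~\ref{subsec:topmethods} (linear ordering of the views alone). For example $\{(0,\{0,1\}),(1,\{0,1,2\}),(2,\{0,1,2\})\}$ has linearly ordered views but is not a facet of the standard chromatic subdivision of $\{0,1,2\}$ --- its views do not come from an ordered partition --- and the iterated Schlegel construction correctly never produces it; under the weaker reading the theorem would literally be false. A smaller bookkeeping point is that the printed definition of $\Sch{\cplC}{k}\cplD$ formally retains every simplex of $\cplD$ (take $\sigma=\emptyset$ in the decomposition $\tau\join\sigma$), so your key lemma should be read against the intended definition in which the subdivided $k$-faces of $\cplC\cap\cplD$ are removed; with that understanding your gluing and naturality checks go through.
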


Figure~\ref{fig:ChFromSchlegel} shows how the standard chromatic subdivision
on a 2-simplex $\sigma=\{0,1,2\}$ with $\coloring(i)=i$ for every $i\in\{0,1,2\}$
is obtained by the series of subdivisions using Schlegel diagram.
For example, the 2-simplex $\{0,1\}\join \{(2,\{0,1,2\})\}$ in
$\Sch{\cls{\sigma}}{2,2}$ is further refined in the next step of subdivision,
say, the 1-simplex $\{0,1\}$ is subdivided into three parts
$\{0,(1,\{0,1\})\}$,
$\{(0,\{0,1\}),(1,\{0,1\})\}$,
$\{(0,\{0,1\}),1\}$, which are each \emph{joined} with $\{(2,\{0,1,2\})\}$.

\subsection{The immediate snapshot protocol with oblivious scan}
\label{subsec:oblivScan}

\begin{algorithm}[t]
  \caption{Multi-round write\&oblivious scan code for the process $i$}
  \label{algo:IobS}
  \begin{algorithmic}
    \Procedure{WOScan}{$d$}
    \State $mem_d[i] \gets v_i$
    \State $view \gets \Call{collect}{mem_d}$
    \If{$\abs{view}=d+1$} \Return $(i,view)$
    \Else~\Return $v_i$
    \EndIf
    \EndProcedure
  \end{algorithmic}
  \begin{algorithmic}
    \Procedure{IS'}{$d$}
    \State $u \gets \Call{WOScan}{d}$
    \If{$u\neq v_i$}
    \Return $u$
    \Else~\Call{IS'}{$d-1$}
    \EndIf
    \EndProcedure
  \end{algorithmic}
\end{algorithm}

Algorithm~\ref{algo:IobS} gives a multi-round immediate snapshot protocol,
which reformulates Algorithm~\ref{algo:IS} in Section~\ref{subsec:ISprotocol}.
It is easy to see that they are indeed equivalent protocols in different presentations.
Remember that in Algorithm~\ref{algo:IS} the view information
collected at each particular round is discarded unless the view
witnesses the expected number of writes. Algorithm~\ref{algo:IobS} just makes
this explicit by employing the \keyword{write\&oblivious scan} operation $\WOScan$
on shared memory array, in place of write\&scan operation.
When process $i$ calls $\WOScan(d)$ and the view does not witness
$d$ writes,  $\WOScan(d)$ returns $v_i$, discarding the view collected at
the scan phase.

With this reformulation, we can prove that
the protocol computes the standard chromatic subdivision,
with an exact correspondence of
a write\&oblivious scan operation at a particular round
with Schlegel diagram.

\begin{lemma} \label{lem:WOScanSchlegel}
  Suppose $\WOScan(d)$ has ever been called by $d+1$ distinct processes.
  If $\sigma=\{v_0, \ldots, v_d \}$ is the collection of private values
  that have been assigned to the $d+1$ processes and
  $\tau$ is the set of results returned by non-faulty processes,
  then $\tau \in \Sch{}{d}\,\sigma$.
\end{lemma}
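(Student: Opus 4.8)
The plan is to reduce the lemma to a single operational fact together with a routine matching against the generating facets of $\Sch{}{d}\,\sigma$, so I would begin by determining the possible return values. Since exactly $d+1$ processes write to $\mathit{mem}_d$ and the values $v_0,\ldots,v_d$ are the distinct vertexes of $\sigma$, every collected view is a subset of $\sigma$, so a view of size $d+1$ must equal $\sigma$. Consequently each non-faulty process either returns the new vertex $(i,\sigma)$ --- when its write\&oblivious scan witnesses all $d+1$ values --- or returns its own input $v_i$ otherwise, and in either case it reports a single vertex of its own color $i$. Writing $I$ for the set of colors of non-faulty processes that witnessed all $d+1$ values and $J$ for those of non-faulty processes that did not, I obtain $\tau=\{(i,\sigma)\mid i\in I\}\cup\{v_j\mid j\in J\}$ with $I\cap J=\emptyset$, so the old and new vertexes of $\tau$ carry disjoint colors.

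Next I would exhibit a generating facet of $\Sch{}{d}\,\sigma$ containing $\tau$. If $I\neq\emptyset$, then $J\subseteq\{0,\ldots,d\}\setminus I$ gives $\tau\subseteq\{v_k\mid k\notin I\}\join\{(i,\sigma)\mid i\in I\}$, which is exactly the facet indexed by the nonempty set $I$, so $\tau\in\Sch{}{d}\,\sigma$. If $I=\emptyset$, then $\tau=\{v_j\mid j\in J\}$ consists of original vertexes only, and it suffices to find a color $i_0\notin J$: for then $\tau\subseteq\{v_k\mid k\neq i_0\}\join\{(i_0,\sigma)\}$ is a face of the facet indexed by $\{i_0\}$. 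Thus the whole statement reduces to the claim that $J\subsetneq\{0,\ldots,d\}$ whenever $I=\emptyset$; equivalently, that the $d+1$ processes cannot all be non-faulty while none of them witnesses all $d+1$ values.

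The main obstacle is precisely this operational claim, which I would settle by the standard last-writer argument. Assume all $d+1$ processes are non-faulty and let $p$ be the process whose write to $\mathit{mem}_d$ completes last in real time. Because each process writes before it scans, $p$ performs its \textproc{collect} only after its own write, hence after all $d+1$ writes have completed; every register of $\mathit{mem}_d$ therefore already holds its value, and $p$'s view equals $\sigma$. So $p$ witnesses all $d+1$ values, placing $p\in I$ and contradicting $I=\emptyset$. Hence $I=\emptyset$ forces at least one process to be faulty, so $J\subsetneq\{0,\ldots,d\}$ and the required color $i_0$ exists, which closes the argument.
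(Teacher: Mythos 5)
Your proof is correct and follows essentially the same route as the paper's: the paper also splits on whether all $d+1$ processes are non-faulty, uses the fact that in that case some process (the last writer) must witness all $d+1$ values so that $\tau$ contains a new vertex, and otherwise notes $\dim(\tau)<d$ so a missing color supplies a containing facet. You simply spell out the last-writer argument and the matching of $\tau$ against the generating facets of $\Sch{}{d}\,\sigma$ in more detail than the paper's terse two-case proof.
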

\begin{proof}
  When $d+1$ processes invoked $\WOScan(d)$ and
  none of them were faulty,
  at least one process witnesses the writes by all the $d+1$ processes
  in its view. This means $\tau\setminus \sigma\neq\emptyset$
  and thus $\tau\in \Sch{}{d}\,\sigma$. When some of the processes
  were faulty, $\dim(\tau)<d$ and $\tau\in \Sch{}{d}\,\sigma$.
\end{proof}

\begin{lemma} \label{lem:roundSchlegel}
  Consider an execution of the protocol $\ISnapO(n)$ by $n+1$ processes,
  in which each process $i$ has started with its own initial private value $v_i$
  and has either successfully returned a result or crashed.
  Let $\sigma_d$ ($n\geq d\geq 0$) denote the set of results that have been
  successfully returned by a call $\WOScan(d)$ by some process in the execution.
  Also, let $\tau_d$ ($n\geq d\geq 0$) denote the set of values that have
  been returned by a call $\ISnapO(k)$ for some $k$ greater than $d-1$.
  We define $\sigma_{n+1}= \{v_0,\ldots,v_n\}$ and $\tau_{n+1}=\emptyset$.

  Then the following properties hold
  for every $d$ $(n+1\geq d\geq 0)$.
  \begin{enumerate}[(i)]
    \item \label{lem:rSch:dim}
    $\dim(\sigma_{d}\cap\sigma_{n+1})<d$;
    \item \label{lem:rSch:decids}
    $\tau_d = \tau_{d+1}\join (\sigma_d\setminus\sigma_{n+1})$
    and $\tau_d\cap \sigma_{n+1}= \emptyset$;
    \item \label{lem:rSch:simplex}
    $\tau_{d+1}\join \sigma_d \in \Sch{\cls{\sigma_{n+1}}}{d,n}\sigma_{n+1}$.
  \end{enumerate}
\end{lemma}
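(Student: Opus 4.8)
The statement bundles together a dimension bound (i), a recurrence describing how the decided values accumulate (ii), and a containment in the iterated Schlegel subdivision (iii); these are mutually entangled, so I would not try to isolate them. The base case $d = n+1$ should be immediate from the definitions: $\sigma_{n+1} = \{v_0,\ldots,v_n\}$ gives $\dim(\sigma_{n+1}\cap\sigma_{n+1}) = \dim(\sigma_{n+1}) = n < n+1$ for (i); for (ii) we have $\tau_{n+1} = \emptyset = \tau_{n+2}\join(\sigma_{n+1}\setminus\sigma_{n+1})$ with the convention that $\tau_{n+2} = \emptyset$, and $\emptyset \cap \sigma_{n+1} = \emptyset$; and (iii) reads $\tau_{n+2}\join\sigma_{n+1} = \sigma_{n+1} \in \Sch{\cls{\sigma_{n+1}}}{n+1,n}\sigma_{n+1}$, which is the trivial subdivision $\cls{\sigma_{n+1}}$ since $h = n+1 > n = j$.

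\medskip

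For the inductive step, assuming the three properties at level $d+1$, I would first analyze (ii). The key operational fact is that $\ISnapO(d)$ is reached only by processes that did \emph{not} decide at rounds $> d$, i.e.\ exactly the processes whose private values lie in $\sigma_{n+1}\setminus\tau_{d+1}$ (those not yet decided). The set $\sigma_d$ of values returned by successful $\WOScan(d)$ calls splits into the views genuinely decided at this round, namely $\sigma_d\setminus\sigma_{n+1}$ (the new subdivision vertexes $(i,\sigma)$, which are never original values), and the residual private values $\sigma_d\cap\sigma_{n+1}$ carried forward. So $\tau_d = \tau_{d+1}\join(\sigma_d\setminus\sigma_{n+1})$ follows from tracking which processes decide, and $\tau_d\cap\sigma_{n+1} = \emptyset$ holds because every decided value is a fresh subdivision vertex of positive ``depth,'' disjoint from the original $\sigma_{n+1}$. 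Property (i) at level $d$ should then follow by combining the operational bound from Lemma~\ref{lem:WOScanSchlegel} — a process returns its own private value (rather than deciding) precisely when it witnesses fewer than $d+1$ writes — with the inductive dimension bound at $d+1$; the undecided private values surviving into round $d-1$ form a strictly smaller simplex.

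\medskip

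The heart of the argument, and the step I expect to be the main obstacle, is (iii). The goal is to show $\tau_{d+1}\join\sigma_d \in \Sch{\cls{\sigma_{n+1}}}{d,n}\sigma_{n+1} = \Sch{\cls{\sigma_{n+1}}}{d}\bigl(\Sch{\cls{\sigma_{n+1}}}{d+1,n}\sigma_{n+1}\bigr)$. I would start from the inductive hypothesis $\tau_{d+2}\join\sigma_{d+1} \in \Sch{\cls{\sigma_{n+1}}}{d+1,n}\sigma_{n+1}$, rewrite it using (ii) at level $d+1$ to isolate a simplex of the form $\tau_{d+1}\join\rho$ where $\rho = \sigma_{d+1}\cap\sigma_{n+1}$ is precisely the surviving $d$-simplex of original values entering round $d$, and then apply Lemma~\ref{lem:WOScanSchlegel} to $\WOScan(d)$ on the $d+1$ processes carrying $\rho$ to obtain $\sigma_d\setminus\rho \in \Sch{}{d}\,\rho$ (up to matching $\sigma_d$ against the Schlegel facets of $\rho$). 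The delicate point is verifying the disjointness conditions required for all the joins to be legitimate simplexes in the complex — that $\tau_{d+1}$ shares no vertex with the newly subdivided $\rho$, and that the carrier structure lines up so the definition of $\Sch{\cls{\sigma_{n+1}}}{d}$ (which demands a decomposition $\tau\join\sigma$ with $\sigma$ a $d$-simplex of the ambient complex being refined) is satisfied with $\tau = \tau_{d+1}$ and $\sigma = \rho$. Matching the bookkeeping of which vertexes are ``already decided'' (in $\tau_{d+1}$) versus ``being subdivided now'' (in $\rho$) against the formal definition of $\Sch{\cplC}{k}$ is where the care is needed, and I would lean on (i) and (ii) at both levels to supply the disjointness and dimension facts that make this matching go through.
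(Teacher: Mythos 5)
Your proposal is correct and follows essentially the same route as the paper: a descending induction on $d$ driven by Lemma~\ref{lem:WOScanSchlegel}, with property~(ii) used to rewrite $\tau_{d+1}\join(\sigma_{d+1}\cap\sigma_{n+1})$ as $\tau_{d+2}\join\sigma_{d+1}$ so that the definition of $\Sch{\cls{\sigma_{n+1}}}{d}$ applies with $\tau=\tau_{d+1}$ and $\sigma=\sigma_{d+1}\cap\sigma_{n+1}$. The only slip is notational: Lemma~\ref{lem:WOScanSchlegel} gives $\sigma_d\in\Sch{}{d}(\sigma_{d+1}\cap\sigma_{n+1})$ directly (the returned set may still contain original private values of undecided processes, so it need not be disjoint from $\rho$), not $\sigma_d\setminus\rho\in\Sch{}{d}\,\rho$, and it is the full $\sigma_d$ that must be joined with $\tau_{d+1}$ in the final step.
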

\begin{proof}
  The property \eqref{lem:rSch:dim} follows from lemma~\ref{lem:WOScanSchlegel}
  by induction on $d$.
  The property \eqref{lem:rSch:decids} immediately follows from the definition
  by an inductive argument.

  Let us show \eqref{lem:rSch:simplex} by induction on $d$.
  For the base case $d=n$, since
  $\sigma_n \in \Sch{}{n}\sigma_{n+1}$ by lemma~\ref{lem:WOScanSchlegel},
  we have $\tau_{n+1}\join \sigma_n=\sigma_n\in
  \Sch{\cls{\sigma_{n+1}}}{n,n}\sigma_{n+1}$.
  For the inductive step, assume
  $\tau_{d+2}\join \sigma_{d+1}\in\Sch{\cls{\sigma_{n+1}}}{d+1,n}\sigma_{n+1}$.
  By lemma~\ref{lem:WOScanSchlegel} we have
  $\sigma_d \in \Sch{}{d}(\sigma_{d+1}\cap\sigma_{n+1})$.
  Since $\sigma_{d+1}\cap\sigma_{n+1}\in\cls{\sigma_{n+1}}$,
  we have
  $\tau_{d+1}\join (\sigma_{d+1}\cap\sigma_{n+1})  =
  \tau_{d+2}\join(\sigma_{d+1}\setminus\sigma_{n+1})\join
  (\sigma_{d+1}\cap\sigma_{n+1})
  = \tau_{d+2}\join\sigma_{d+1}
  \in
  \Sch{\cls{\sigma_{n+1}}}{d+1,n}\,\sigma_{n+1}$ by property~\eqref{lem:rSch:decids}
  and the induction hypothesis.
  Hence $\tau_{d+1}\join\sigma_d \in
  \Sch{\cls{\sigma_{n+1}}}{d}(\Sch{\cls{\sigma_{n+1}}}{d+1,n}\,\sigma_{n+1})
  =\Sch{\cls{\sigma_{n+1}}}{d,n}\,\sigma_{n+1}$.
\end{proof}

\begin{theorem} \label{th:ISprotocol}
  Suppose $n+1$ processes executed the protocol $\ISnapO(n)$
  with the set $\sigma=\{v_0,\ldots,v_n\}$
  of initial private inputs.
  If $\tau$ is the set of results returned by non-faulty processes,
  $\tau\in\Chromatic\,\sigma$.
\end{theorem}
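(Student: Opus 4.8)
The plan is to obtain the theorem as the $d=0$ instance of Lemma~\ref{lem:roundSchlegel}, closed off by Kozlov's Theorem~\ref{th:ChByIteratedSchlegel}. First I would reconcile the notation: in the execution described in Lemma~\ref{lem:roundSchlegel} we have $\sigma_{n+1}=\sigma$, and the set $\tau$ of results returned by all non-faulty processes is precisely $\tau_0$, since every process that does not crash returns through some call $\ISnapO(k)$ with $0\le k\le n$ and $\tau_0$ is by definition the union of all values so returned. It therefore suffices to show $\tau_0\in\Chromatic\,\sigma$.

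The middle step is to identify the ``full'' simplex $\tau_{1}\join\sigma_0$ supplied by property~\eqref{lem:rSch:simplex} with $\tau_0$ itself. Property~\eqref{lem:rSch:dim} at $d=0$ gives $\dim(\sigma_0\cap\sigma_{n+1})<0$, hence $\sigma_0\cap\sigma_{n+1}=\emptyset$ and $\sigma_0\setminus\sigma_{n+1}=\sigma_0$; property~\eqref{lem:rSch:decids} at $d=0$ then reads $\tau_0=\tau_{1}\join(\sigma_0\setminus\sigma_{n+1})=\tau_{1}\join\sigma_0$. This is the one place that needs genuine care, as it is exactly where one checks that the values decided at the final round carry no original input value, so that the two descriptions of the terminal simplex coincide.

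Finally I would invoke property~\eqref{lem:rSch:simplex} at $d=0$, which places $\tau_{1}\join\sigma_0$ in $\Sch{\cls{\sigma_{n+1}}}{0,n}\,\sigma_{n+1}=\Sch{\cls{\sigma}}{0,n}\,\cls{\sigma}$, and apply Theorem~\ref{th:ChByIteratedSchlegel} to the pure $n$-complex $\cls{\sigma}$ to rewrite this subdivision as $\Chromatic\,\cls{\sigma}=\Chromatic\,\sigma$. Chaining the equalities yields $\tau=\tau_0=\tau_{1}\join\sigma_0\in\Chromatic\,\sigma$, as claimed.

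I do not expect a serious obstacle here: the substantive content has already been discharged in Lemma~\ref{lem:roundSchlegel} (the round-by-round correspondence with Schlegel subdivision) and in Kozlov's theorem (that iterated Schlegel subdivision builds the standard chromatic subdivision). The only real hazard is the index bookkeeping --- keeping apart $\sigma_d$, the values freshly returned at round $d$, from $\tau_d$, the values already decided by the end of round $d$ --- together with the degenerate-dimension argument forcing $\sigma_0$ and $\sigma_{n+1}$ to be disjoint.
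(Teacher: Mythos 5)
Your proposal is correct and follows essentially the same route as the paper: both reduce the theorem to Lemma~\ref{lem:roundSchlegel}\eqref{lem:rSch:simplex} at $d=0$ and close with Kozlov's Theorem~\ref{th:ChByIteratedSchlegel}. The only difference is that you explicitly derive the identification $\tau=\tau_0=\tau_1\join\sigma_0$ from properties \eqref{lem:rSch:dim} and \eqref{lem:rSch:decids}, a step the paper's proof states without elaboration.
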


\begin{proof}
  Let $\sigma_d$'s and $\tau_d$'s denote the sets as defined in
  lemma~\ref{lem:roundSchlegel}. Then, $\sigma=\sigma_{n+1}$ and
  $\tau=\tau_1\join \sigma_0$.
  By lemma~\ref{lem:roundSchlegel}\eqref{lem:rSch:simplex} and
  theorem~\ref{th:ChByIteratedSchlegel},
  we have
  $\tau=\tau_1\join \sigma_0 \in \Sch{\cls{\sigma}}{0,n}\sigma
  =\Chromatic\,\sigma$.
\end{proof}

\section{The Generic Protocol for Solving Tasks and Its Optimization}
\label{sec:optimiz}

This section gives a generic protocol for solving a task
on read-write shared memory distributed system,
using the immediate snapshot protocol presented in the previous section,
and discusses how, for each concrete instance, the generic protocol
can be optimized to reduce shared memory access.

\subsection{A generic protocol via iterated immediate snapshot}
\label{subsec:genericProtocol}

By the asynchronous computability theorem,
for any wait-free solvable task, we have a protocol
$(\cplI,\cplO,\delta\circ\Chromatic^K)$ that implements the task,
where the carrier map is given by a pair
of the full-information protocol of
the $K$-iterated standard chromatic subdivision $\Chromatic^K$
(by means of the iterated use of the multi-round immediate snapshot protocol)
and a decision map $\delta: V(\Chromatic^K) \to V(\cplO)$.
Without loss of generality, we may assume $K\geq 1$.

\begin{algorithm}
  \caption{The generic code for the process $i$, using iterated immediate snapshot}
  \label{algo:IIS}
  \begin{algorithmic}
    \Procedure{WOScan}{$k,d$}
    \State $mem_{k,d}[i] \gets v_i$
    \State $view \gets \Call{collect}{mem_{k,d}}$
    \If{$\abs{view}=d+1$} \Return $(i,view)$
    \Else~\Return $v_i$
    \EndIf
    \EndProcedure
  \end{algorithmic}
  \begin{algorithmic}
    \Procedure{IIS}{$k,d$}
    \State $u \gets \Call{WOScan}{k,d}$
    \If{$u\neq v_i$}
      \If{$k=K$} \Return $\delta(u)$
      \Else~$v_i \gets u$; $\Call{IIS}{k+1,n}$
      \EndIf
    \Else~\Call{IIS}{$k,d-1$}
    \EndIf
    \EndProcedure
  \end{algorithmic}
\end{algorithm}

Algorithm~\ref{algo:IIS} gives the code that implements the protocol
$(\cplI,\cplO,\delta\circ\Chromatic^K)$ in the generic form.
Each process $i$ initiates the protocol execution
by invoking $\IISnap(1,n)$, where
its initial private input is passed through the variable $v_i$.
Throughout the entire protocol execution, each process goes through
a series of shared memory arrays
$mem_{k,d}$ ($1\leq k\leq K$, $n\geq d\geq 0$).
For each recursive call $\IISnap(k,d)$,
process $i$ computes the $(n-d+1)$-th round of the $k$-th iteration
of the multi-round immediate snapshot protocol,
by invoking the write\&oblivious scan $\WOScan(k,d)$
on the array $mem_{k,d}$.
Each round corresponds to a single step of
subdivision on $d$-simplexes using Schlegel diagram, for the $k$-th
iteration of standard chromatic subdivision.
When the multi-round execution by process $i$ finishes the last iteration
of chromatic subdivision (i.e., $k=K$),
the protocol returns $\delta(u)$ as the output, where $u$ is a vertex
of the $K$-iterated standard chromatic subdivision.

The following is a corollary to Theorem~\ref{th:ISprotocol}.
\begin{theorem} \label{th:IIS}
  Let $(\cplI,\cplO,\delta\circ\Chromatic^K)$ be a protocol
  for $n+1$ processes that solves a task.
  Suppose each process $i$ starts with a private input value $v_i$
  such that   $\sigma=\{v_0,\ldots,v_n\}\in\cplI$
  and executes the protocol by invoking $\IISnap(1,n)$.
  If $\tau$ is the set of results returned by non-faulty processes,
  $\tau\in \delta(\Chromatic^K\sigma)$.
\end{theorem}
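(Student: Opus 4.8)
The plan is to read Algorithm~\ref{algo:IIS} as $K$ sequential, independent runs of the single‑iteration immediate snapshot protocol of Section~\ref{subsec:oblivScan}, and to apply Theorem~\ref{th:ISprotocol} once per iteration, chaining the results. First I would separate the decision map from the subdivision. A globally non‑faulty process is exactly one that completes all $K$ iterations; at iteration $K$ it returns $\delta(u)$ for the vertex $u$ it obtained from the successful $\WOScan(K,\cdot)$. Writing $\rho$ for the set of these vertices $u$ over all non‑faulty processes, we have $\tau=\delta(\rho)$. Since $\delta$ is a color‑preserving simplicial map, it suffices to prove $\rho\in\Chromatic^K\sigma$: then $\rho$ is a simplex of $\Chromatic^K\sigma$, so $\tau=\delta(\rho)$ is a simplex of the image complex $\delta(\Chromatic^K\sigma)$, which is precisely the claim.

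Next I would set up an induction on the iteration index. Let $P_0=\{0,\ldots,n\}$ and, for $1\le k\le K$, let $P_k$ be the set of processes that complete iteration $k$, so that $P_K\subseteq\cdots\subseteq P_1\subseteq P_0$ and $P_K$ is exactly the set of non‑faulty processes. Put $\sigma^{(0)}=\sigma$ and let $\sigma^{(k)}$ be the set of values held as private inputs by the processes of $P_k$ after iteration $k$ (equivalently, the outputs of the successful calls $\WOScan(k,\cdot)$), so $\rho=\sigma^{(K)}$. The claim is $\sigma^{(k)}\in\Chromatic^k\sigma$ for all $0\le k\le K$. The base case $k=0$ is $\sigma\in\cls{\sigma}=\Chromatic^0\sigma$. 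For the inductive step, iteration $k$ is run by the processes of $P_{k-1}$ on the fresh arrays $mem_{k,\cdot}$ starting from $\IISnap(k,n)$, with their private inputs forming $\sigma^{(k-1)}$; this is structurally an execution of $\ISnapO(n)$. Theorem~\ref{th:ISprotocol} then gives $\sigma^{(k)}\in\Chromatic\,\sigma^{(k-1)}$, and since $\sigma^{(k-1)}\in\Chromatic^{k-1}\sigma$ by hypothesis, the simplex $\sigma^{(k-1)}$ lies in the complex $\Chromatic^{k-1}\sigma$, so $\Chromatic\,\sigma^{(k-1)}\subseteq\Chromatic(\Chromatic^{k-1}\sigma)=\Chromatic^k\sigma$, whence $\sigma^{(k)}\in\Chromatic^k\sigma$. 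Taking $k=K$ and combining with the reduction above finishes the proof.

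The step I expect to be the main obstacle is the clean application of Theorem~\ref{th:ISprotocol} to a later iteration, where two points need care. The first is the distinction between globally non‑faulty and iteration‑local non‑faulty: a process may complete iteration $k$ (and so contribute to $\sigma^{(k)}$) yet crash in a later iteration, so Theorem~\ref{th:ISprotocol}'s notion of ``non‑faulty'' must be read relative to the single iteration, ensuring $\sigma^{(k)}$ collects exactly the outputs of the iteration‑$k$ completers $P_k$ while the monotonicity $P_k\subseteq P_{k-1}$ is tracked.

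The second, and genuinely delicate, point is dimension and starting round. When $|P_{k-1}|=m+1<n+1$, iteration $k$ is executed by only $m+1$ processes that nevertheless begin at round $n$ via $\IISnap(k,n)$, whereas Theorem~\ref{th:ISprotocol} is phrased for $n+1$ processes. Here I would observe that in rounds $n,n-1,\ldots,m+1$ at most $m+1\le d$ processes write to $mem_{k,d}$, so no process can witness $d+1$ writes; every process therefore returns its own value unchanged and descends, and these rounds realize only trivial Schlegel subdivisions, since $\Sch{}{d}\,\sigma^{(k-1)}=\cls{\sigma^{(k-1)}}$ whenever $d\neq\dim(\sigma^{(k-1)})=m$. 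Consequently the net effect of iteration $k$ is $\Sch{\cls{\sigma^{(k-1)}}}{0,n}\,\sigma^{(k-1)}=\Sch{\cls{\sigma^{(k-1)}}}{0,m}\,\sigma^{(k-1)}=\Chromatic\,\sigma^{(k-1)}$ by Theorem~\ref{th:ChByIteratedSchlegel}, so the conclusion of Theorem~\ref{th:ISprotocol} extends verbatim to each iteration regardless of how many processes survived, which is what the induction above silently relies on.
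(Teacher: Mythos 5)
Your proposal is correct and follows exactly the route the paper intends: the paper offers no written proof, presenting the theorem as a corollary of Theorem~\ref{th:ISprotocol} obtained by applying that theorem once per iteration and then composing with the simplicial map $\delta$, which is precisely your induction on $k$ with $\sigma^{(k)}\in\Chromatic\,\sigma^{(k-1)}\subseteq\Chromatic^{k}\sigma$. Your extra care about later iterations started by fewer than $n+1$ processes --- checking that the rounds $d>\dim(\sigma^{(k-1)})$ are trivial Schlegel subdivisions so Theorem~\ref{th:ISprotocol} still applies --- fills in a detail the paper leaves implicit, and is a welcome addition rather than a deviation.
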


\subsection{Protocol optimization for reduced memory access}
\label{subsec:optProto}

Algorithm~\ref{algo:IIS}
gives a generic protocol, but
a concrete instance of it %
often contains redundant memory access.
Below we argue that each instance of the generic protocol can be
mechanically optimized to skip the redundant access,
by applying the technique of program specialization.
(Program specialization is a source-level
program optimization technique, also known as
\keyword{partial evaluation} \cite{JonesGomardSestoft:PEbook}.
See Appendix~\ref{sec:PE} for a brief overview.)

To see how the protocol is optimized, let us consider a
particular instance of the generic protocol that solves
a renaming task \cite{BorowskyGafni:PODC93} for 3 processes, \unskip\footnote{%
  This particular instance of renaming task is taken from
  \cite[Chapter~12]{Herlihy:DCTopology}.}
where the 3 processes, starting with initial assignment
$v_0=0$, $v_1=1$, $v_2=2$, respectively,
decide on different names taken from $\{ 0,1,2,3,4 \}$.
The protocol is given by
$(\cplI,\cplO,\delta \circ \Chromatic^2)$,
where
$\cplI = \cls{\{ 0,1,2\}}$ is the input complex, %
$\cplO = \{ \tau \mid \tau\subseteq\{ 0,1,2,3,4 \},
\dim(\tau)\leq 2 \}$ is the output complex of
differently renamed vertexes,
and $\delta: V(\Chromatic^2\,\cplI) \to V(\cplO)$ is the decision map
defined with the corresponding parent map
$\pmap{2,1}: V(\Chromatic^2\,\cplI) \to V(\Chromatic\,\cplI)$ as given below:
\[
  \delta(w) =
  \begin{cases}
    \quad 4
    & \text{if $\pmap{2,1}(w)=(0,\{0,1,2\})$,}
    \\
    \quad 3
    & \text{\parbox[t]{.78\textwidth}{%
    if $\Carrier(w,\Chromatic\,\cplI)\not\supseteq\{(1,\{0,1,2\}),(2,\{0,1,2\})\}$\\
    and $\pmap{2,1}(w)=(i,\{0,1,2\})$ for some $i\in \{1,2\}$,
    }}
    \\
    \quad 2
    & \text{\parbox[t]{.78\textwidth}{%
    if either $\coloring(w)=1$ and $\Carrier(w,\Chromatic\,\cplI)\supseteq\{(1,\{0,1,2\}),(2,\{0,1,2\})\}$\\
    or $\pmap{2,1}(w)=(i,\{i,j\})$ for some $i,j\in \{0,1,2\}$ such that $i<j$,
    }}
    \\
    \quad 1
    & \text{\parbox[t]{.78\textwidth}{%
    if either $\coloring(w)=2$ and $\Carrier(w,\Chromatic\,\cplI)\supseteq\{(1,\{0,1,2\}),(2,\{0,1,2\})\}$\\
    or $\pmap{2,1}(w)=(i,\{i,j\})$ for some $i,j\in \{0,1,2\}$ such that $i>j$,
    }}
    \\
    \quad 0
    & \text{\parbox[t]{.78\textwidth}{%
    if $\pmap{2,1}(w)=(i,\{i\})$ for some $i\in \{0,1,2\}$.
    }}
  \end{cases}
\]
In the definition above, it is assumed that
each vertex is appropriately colored according to the context.
In particular, as every simplex $\tau\in\cplO$ of renamed processes
is colored, $V(\cplO)$ comprises 15 vertexes, namely, 3 differently colored vertexes
per each name taken from $\{ 0,1,2,3,4 \}$. Accordingly,
as $\delta$ is a color-preserving simplicial map,
the renamed output $\delta(w)$ for each $w\in V(\cplO)$
is tacitly given the matching color,
i.e., $\coloring(w)$.

\begin{figure}[t]
  \begin{center}
  \begin{minipage}[t]{10cm}
  \includegraphics[scale=0.42]{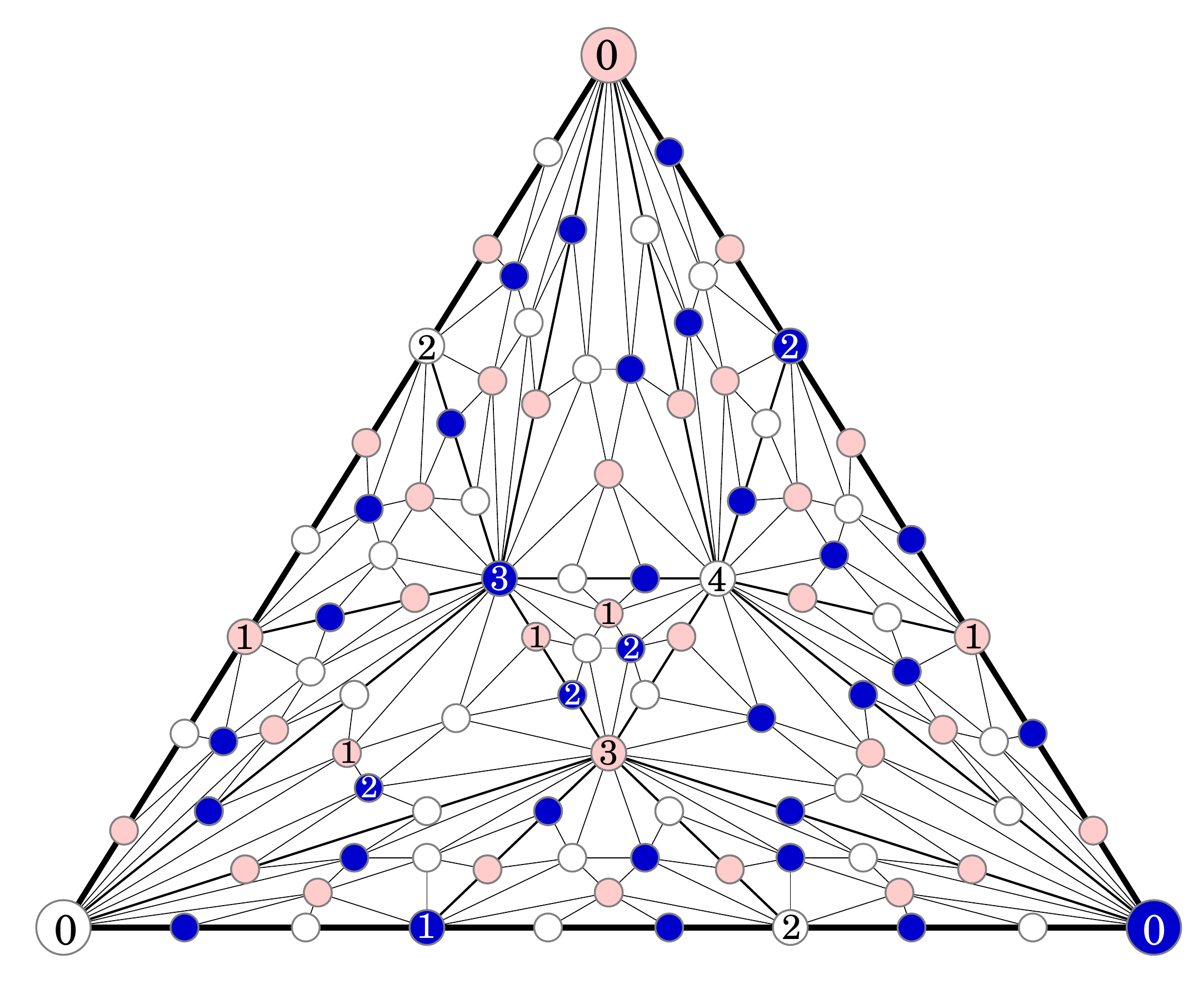}
  \end{minipage}
  \caption{Decision map $\delta$ for renaming}
  \label{fig:renamDecision}
  \end{center}
\end{figure}

Figure~\ref{fig:renamDecision} shows how
the decision map $\delta$ assigns an output to each
vertex of $\Chromatic^2\,\cplI$.
For those vertexes whose outputs are left unspecified in the figure,
it should be understood that
such a vertex $w$
receives the output $\delta(\pmap{2,1}(w))$,
namely the same output as the parent vertex $\pmap{2,1}(w)\in\Chromatic\,\cplI$ does.
For instance, the white vertex (of process 0)
of the very central simplex of the subdivision is assigned
the output $4$ by $\delta$, as its parent is $(0,\{0,1,2\})$,
the white vertex introduced by the first subdivision using Schlegel diagram.

Let us consider an execution of the generic protocol (Algorithm~\ref{algo:IIS})
for the renaming task. In the execution, each process~$i$
executes a chain of recursive calls of $\IISnap$,
where each recursive call updates $v_i$ to a vertex of a finer subdivision.
For instance, consider
the following particular recursive call chain for process~$2$:
\begin{align*}
2\in \cplI \xrightarrow{~\IISnap(1,2)~} &
2\in \Sch{\cplI}{2,2}\cplI
\xrightarrow{~\IISnap(1,1)~}
(2,\{1,2\})\in \Sch{\cplI}{1,2}\cplI \\
\xrightarrow{~\IISnap(2,2)~} &
(2,\{1,2\})\in \Sch{\Chromatic\,\cplI}{2,2}(\Chromatic\,\cplI) \\
\xrightarrow{~\IISnap(2,1)~} &
(2,\{(0,\{0,1,2\}),(2,\{1,2\})\})\in\Sch{\Chromatic\,\cplI}{1,2}(\Chromatic\,\cplI),
\end{align*}
where each transition
$u\in \Div{Div}\,\cplI \xrightarrow{~\IISnap(k,d)~} u'\in \Div{Div}'\cplI$
indicates that a recursive call $\IISnap(k,d)$
updates $v_2$ from $u$ to $u'$, which are the vertexes of subdivisions
$\Div{Div}\,\cplI$ and $\Div{Div}'\cplI$, respectively.
Process 2 terminates the execution of the protocol
with a final output $\delta((2,\{(0,\{0,1,2\}),(2,\{1,2\})\}))=1$.

\begin{figure}
  \centering
  \begin{minipage}[b]{0.31\textwidth}
    \includegraphics[scale=0.23]{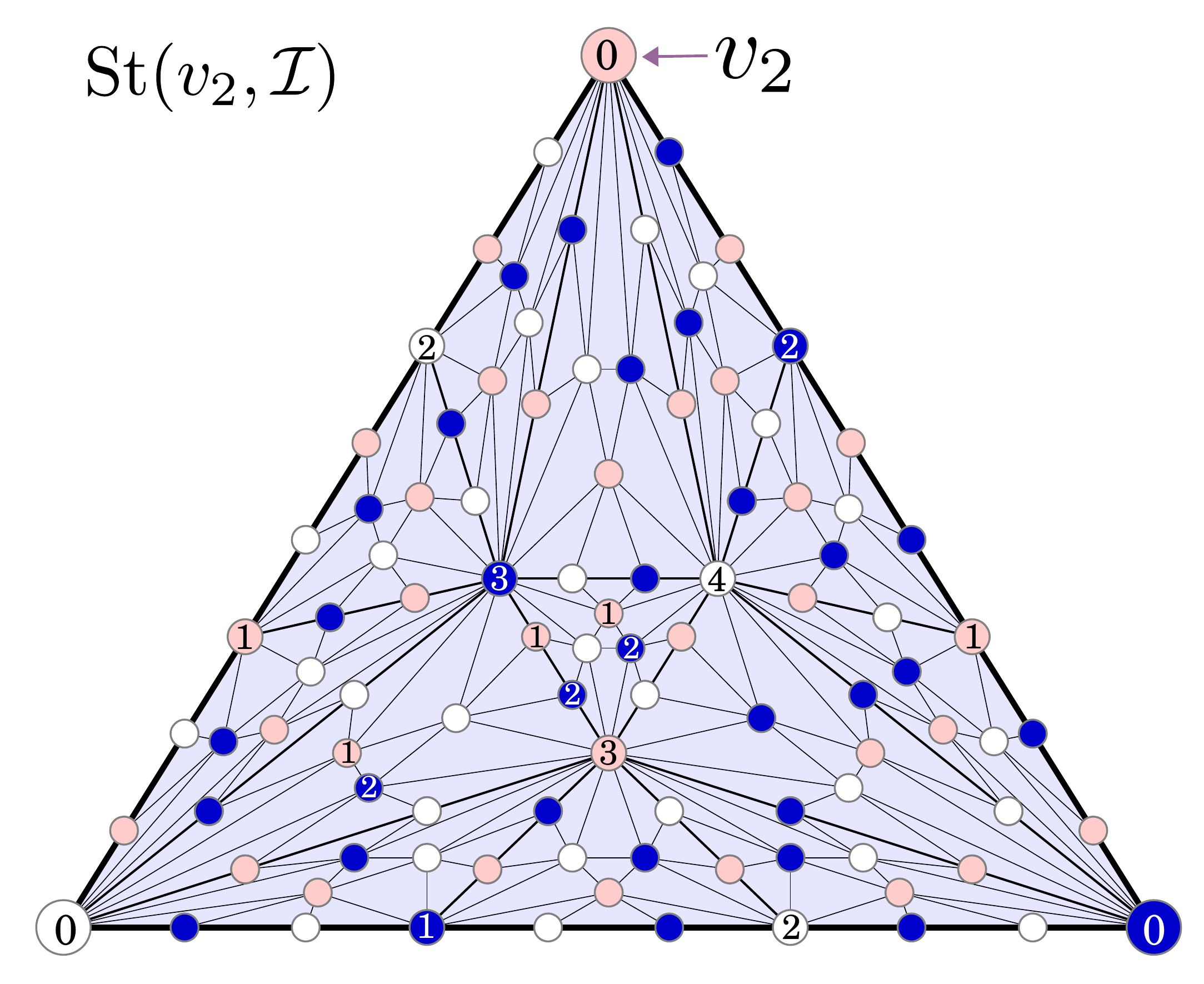}
  \end{minipage}%
  \hfil%
  \begin{minipage}[b]{0.31\textwidth}
    \includegraphics[scale=0.23]{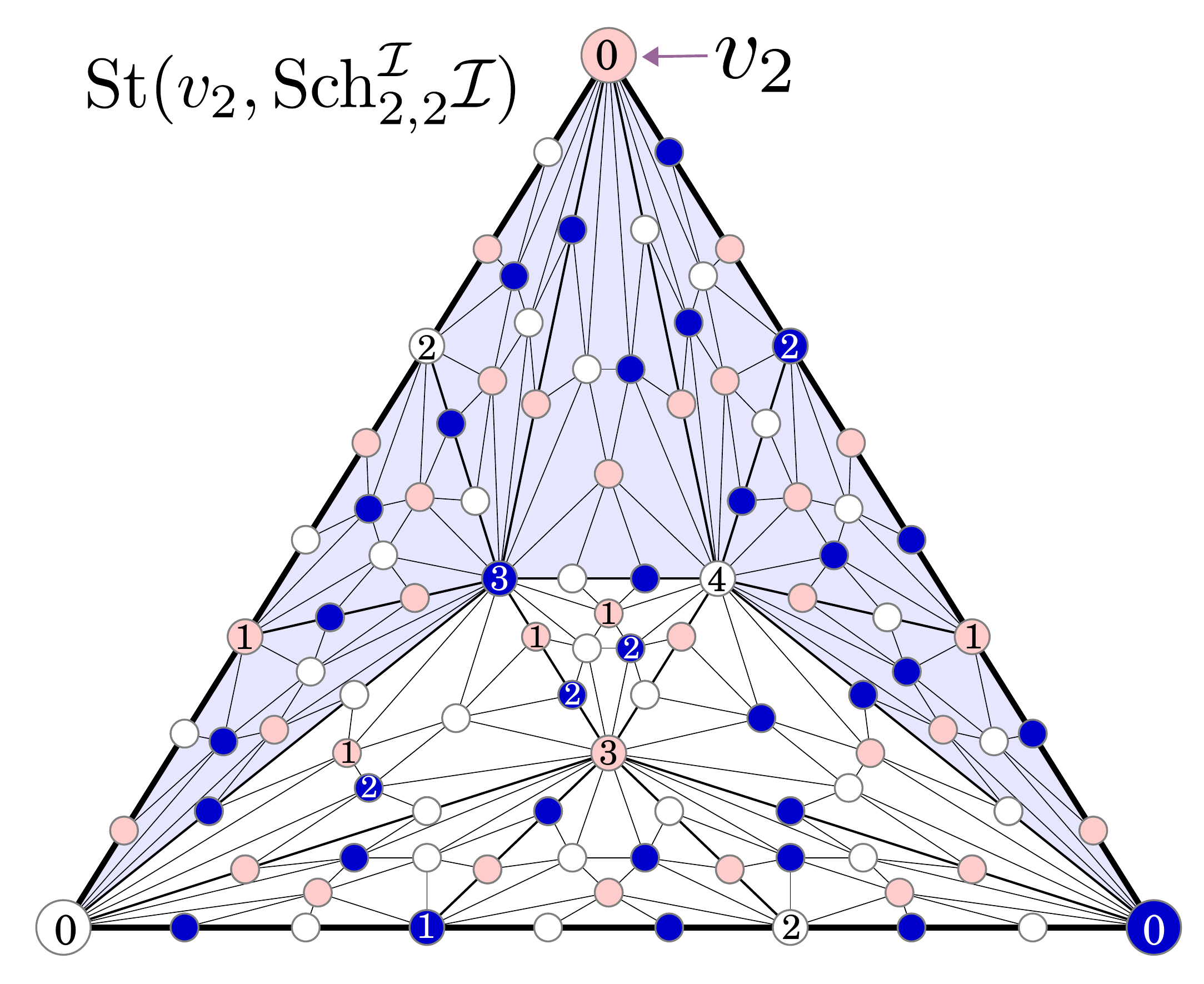}
  \end{minipage}%
  \hfil%
  \begin{minipage}[b]{0.31\textwidth}
    \includegraphics[scale=0.23]{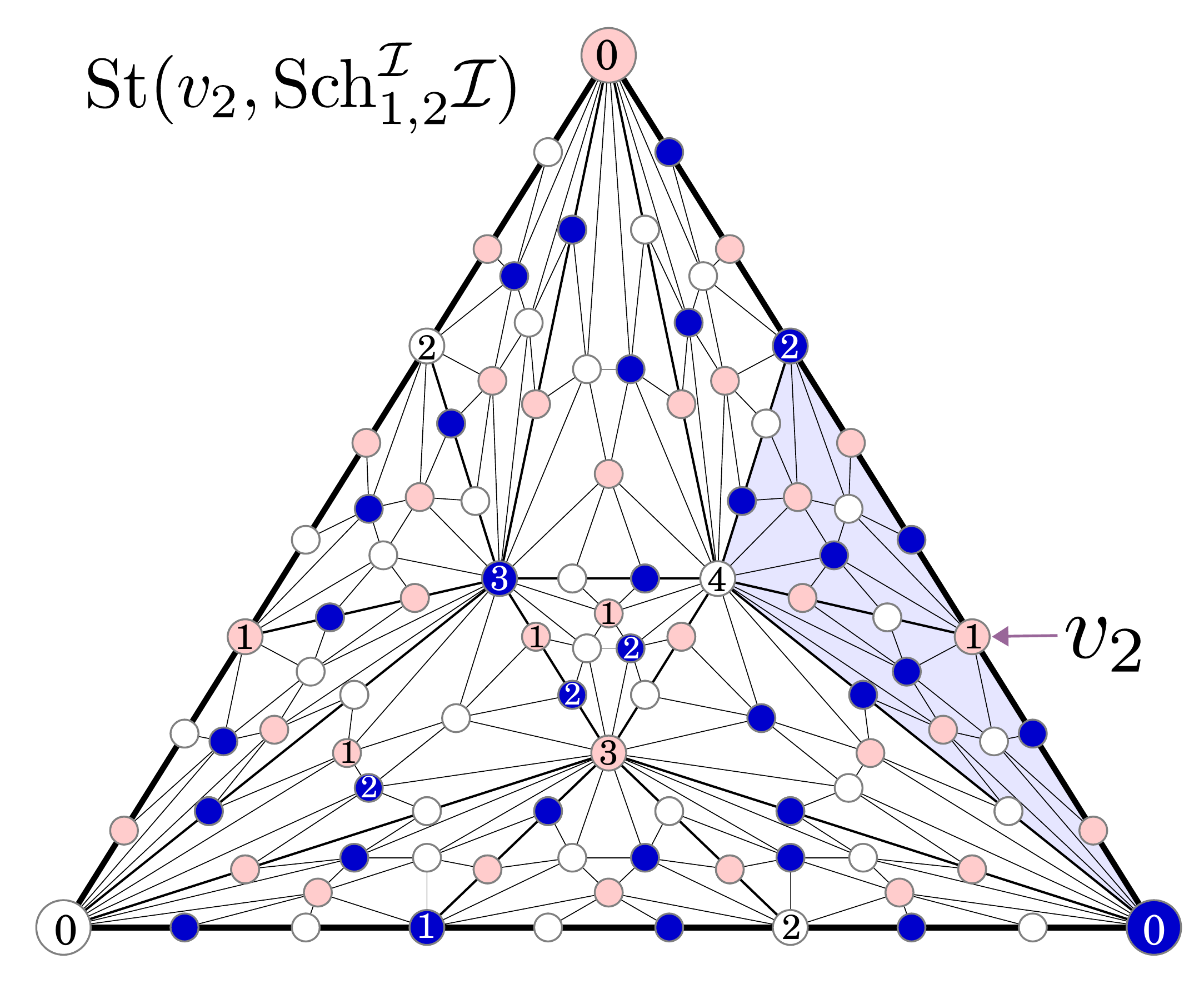}
  \end{minipage}~%

  \caption{Descendant vertexes in the iterated subdivisions
  (up to the second iteration, where process~2 is ready for decision)}
  \label{fig:descendants}
\end{figure}

This recursive call chain, however, could have decided the final
output at an earlier stage of recursion, because
the vertex $v_2\in \Div{Div}\,\cplI$ at each recursive call
can only be updated to a vertex
(of matching color) covered by $\Star(v_2,\Div{Div}\,\cplI)$
by the successive recursive calls.
Figure~\ref{fig:descendants} illustrates the first few steps,
where the shaded part indicates the simplexes of $\Chromatic^2\cplI$
covered by the star at each recursive step.
Initially, when the input complex $\cplI$ is not yet subdivided,
$\Star(v_2,\cplI)$ covers all the vertexes of $\Chromatic^2\cplI$;
After the first recursive call $\IISnap(1,2)$, $\Star(v_2,\Sch{\cplI}{2,2}\cplI)$
covers fewer vertexes in a smaller region
but they do not agree with the outputs carried by $\delta$
(they can be either $0$ or $1$);
After the second recursive call $\IISnap(1,1)$,
$\Star(v_2,\Sch{\cplI}{1,2}\cplI)$
covers even fewer vertexes and they are all carried to the
same output vertex $1$ by $\delta$.
Hence, as soon as $v_2$ is updated to the vertex
$(2,\{1,2\})$ by the recursive call $\IISnap(1,1)$,
we can decide the output of process~2 by
$\delta((2,\{1,2\}))= 1$, skipping the remaining recursive calls.

\begin{algorithm}[t]
  \caption{An optimized generic code for process $i$}
  \label{algo:IISgenopt}
  \begin{algorithmic}

    \Procedure{IIS}{$k,d$}
    \If{$\delta(\pmap{}^{-1}(v_i))=\{u\} \text{ for some } u\in V(\cplO)$} \Return $u$ \EndIf
    \State $u \gets \Call{WOScan}{k,d}$
    \If{$u\neq v_i$} $v_i \gets u$; $\Call{IIS}{k+1,n}$
    \Else~\Call{IIS}{$k,d-1$}
    \EndIf
    \EndProcedure
  \end{algorithmic}
\end{algorithm}

\begin{algorithm}[t]
  \caption{A customized code for the renaming task}
  \label{algo:IISrenam}
  \def\procN#1{\Comment{\makebox[18em][l]{\textsc{Code for process #1}}}}
  \begin{algorithmic}
    \Procedure{IIS}{$k,d$} \procN{0}
    \If{$v_0=(0,\{0,1,2\})$} \Return 4
    \ElsIf{$d=1 \wedge v_0=0$} \Return 0
    \ElsIf{$d=1$} \Return 2
    \Else
      \State $u \gets \Call{WOScan}{k,d}$
      \If{$u\neq v_0$} $v_0 \gets u$; $\Call{IIS}{k+1,n}$
      \Else~\Call{IIS}{$k,d-1$} \EndIf
    \EndIf
    \EndProcedure
  \end{algorithmic}
  \medskip

  \begin{algorithmic}
    \Procedure{IIS}{$k,d$} \procN{1}
    \If{$k=1 \wedge d=1 \wedge v_1=1$} \Return 0
    \ElsIf{$k=1 \wedge v_1=(1,\{0,1\})$} \Return 1
    \ElsIf{$k=1 \wedge v_1=(1,\{1,2\})$} \Return 2
    \ElsIf{$\left[\begin{array}[c]{l}
          k=2 \wedge v_1=(1,\tau) \text{ for some $\tau$ }\\ \text{s.t. } \{(1,\{0,1,2\}),(2,\{0,1,2\})\}\subseteq\tau
    \end{array}\right]$} \Return 2
    \ElsIf{$k=2 \wedge d=1$} \Return 3
    \Else
      \State $u \gets \Call{WOScan}{k,d}$
      \If{$u\neq v_1$} $v_1 \gets u$; $\Call{IIS}{k+1,n}$
      \Else~\Call{IIS}{$k,d-1$} \EndIf
    \EndIf
    \EndProcedure
  \end{algorithmic}
  \medskip

  \begin{algorithmic}
    \Procedure{IIS}{$k,d$} \procN{2}
    \If{$k=1 \wedge d=1 \wedge v_2=2$} \Return 0
    \ElsIf{$k=1 \wedge (v_2=(2,\{0,2\}) \vee v_2=(2,\{1,2\}))$} \Return 1
    \ElsIf{$\left[\begin{array}[c]{l}
          k=2 \wedge v_2=(2,\tau) \text{ for some $\tau$ }\\ \text{s.t. } \{(1,\{0,1,2\}),(2,\{0,1,2\})\}\subseteq\tau
    \end{array}\right]$} \Return 1
    \ElsIf{$k=2 \wedge d=1$} \Return 3
    \Else
      \State $u \gets \Call{WOScan}{k,d}$
      \If{$u\neq v_2$} $v_2 \gets u$; $\Call{IIS}{k+1,n}$
      \Else~\Call{IIS}{$k,d-1$} \EndIf
    \EndIf
    \EndProcedure
  \end{algorithmic}
\end{algorithm}

By the observation so far, we can see that the generic protocol
(Algorithm~\ref{algo:IIS}) can be further optimized
to perform fewer shared memory operations by skipping
redundant recursive calls: Once a process has reached to a point
where a sole final output is determined by the decision map $\delta$,
the remaining recursive calls can be skipped.
To make it precise,
for every $v\in \Div{Div}\,\cplI$ where $\Div{Div}\,\cplI$ is
an intermediate subdivision toward the finest subdivision
$\Chromatic^{K}\,\cplI$, let us define the \keyword{descendants} of $v$
by $\pmap{}^{-1}(v)= \{ u\in\Chromatic^{K}\,\cplI \mid \pmap{}(u)=v \}$,
where $\pmap{}: V(\Chromatic^{K}\,\cplI) \to
V(\Div{Div}\,\cplI)$ is the corresponding parent map.
We present the optimized version of the generic protocol
in Algorithm~\ref{algo:IISgenopt}.
(The omitted procedure $\WOScan$ is the same as in Algorithm~\ref{algo:IIS}.)

We notice that, for a particular decision map $\delta$ and each process $i$,
the value $\delta(\pmap{}^{-1}(v_i))$ can be precomputed
for every possible vertex assigned to $v_i$ in advance of actual
execution of the protocol.
This means that, specializing the code of Algorithm~\ref{algo:IISgenopt}
w.r.t.\ the particular decision map $\delta$,
we can generate a further optimized implementation code.

Algorithm~\ref{algo:IISrenam} gives
such a code customized for the above renaming task for 3 processes.
Observe that, precomputing descendant vertexes, program specialization
has eliminated those redundant recursive calls which are not on
reachable execution paths.

Here we notice that the optimization method discussed above
is applicable to any protocol of the form $\delta\circ\Chromatic^K\,\cplI$.
Furthermore, the optimized code
is mechanically derived by specializing the generic protocol
w.r.t.\ the concrete instance of decision map $\delta$.

Although the program specialization gives a general optimization
method, it heavily depends on each protocol instance
how much memory access can be reduced.
At one extreme, a protocol $(\cplC,\cplC, \pi\circ\Chromatic\,\cplC)$,
where $\cplC=\cls{\{v_0,\ldots,v_d\}}$ and
$\pi:V(\Chromatic\,\cplC)\to V(\cplC)$ is the parent map,
is optimized to a protocol $(\cplC,\cplC,\Phi)$
with a trivial carrier map such that $\Phi(\sigma)=\cls{\sigma}$
that performs no shared memory access.
At the other extreme,
a protocol $(\cplC,\Chromatic\,\cplC, \iota\circ \Chromatic\,\cplC)$ for
chromatic agreement task, where
$\iota:V(\Chromatic\,\cplC)\to V(\Chromatic\,\cplC)$ is an identity
vertex map, is not optimized at all by specialization,\unskip\footnote{%
  To be precise, any protocol is specialized to a code that at least skips
  the subdivisions on 0-dimensional simplexes.
  The original implementation of immediate snapshot by Borowsky and Gafni
  can also be optimized likewise.
  } %
since each vertex of $\Chromatic\,\cplC$ is assigned a different output.
Thus, there is no general theorem on the reduction in the number
or complexity of memory access.
Furthermore, the code derived by the optimization method is,
as is often the case with those obtained by automatic program generation,
inevitably less structured
than those protocols which are manually devised with
human insights (e.g., the protocols
\cite{BorowskyGafni:PODC93,GafniRajsbaum10} for renaming task).

\section{Conclusion and Future Work}
\label{sec:conclusion}

We have shown that the multi-round protocol for the
immediate snapshot by Borowsky and Gafni can be reformulated
to conform to, in terms of combinatorial topology, Kozlov's construction of
the standard chromatic subdivision via Schlegel diagrams.
This gives a topologically smoother account for the protocol,
where each round is simply a subdivision using Schlegel diagram.
This topological simplicity has led to a straightforward
method for optimizing distributed protocols defined by means of
the iterated immediate snapshot:
Each process executing the protocol narrows down
the set of possible outputs per each round
and can decide the final output at an earlier round,
beyond which the same final output is reached no matter how
the remaining rounds are executed.

The present paper exemplified that
a topologically simpler modeling can better incorporate
the theoretical results in topological studies on distributed computing
into the more practical side of distributed systems.
In this respect, it would be of interest of future investigation
to generalize the result to encompass shared memory systems
of different failure models such as
\cite{SaraphHerlihyGafni16,GafniKuznetsovManolescu14,GafniHeKuznetsovRieutord16}.
Developing an appropriate multi-round protocol that
operates on a topological model of a particular failure model,
we would be able to optimize the corresponding class of protocols.
Such an enhanced multi-round protocol would necessarily need to employ
an augmented set of memory operations in a way that
the topological structure induced from the extra operations
(e.g., the one induced from
the test-and-set operation \cite{HerlihyRajsbaum94})
is compatible with the failure model.

}

\subsection*{Acknowledgment}
I would like to thank Nayuta Yanagisawa for
his valuable comments on a draft of this paper.
This work was supported by JSPS KAKENHI Grant Number 16K00016.

{
\bibliographystyle{plainurl}
\bibliography{distrib}
}

\clearpage%
\appendix
\section*{Appendix}
\section{A Quick Look at Partial Evaluation}
\label{sec:PE}

This appendix gives a brief overview of partial evaluation,
a program optimization technique by program specialization.
Partial evaluation is a matured field that has a long history
of research. For details that cannot be covered in
the following short overview, readers are advised to consult a textbook,
say \cite{JonesGomardSestoft:PEbook}.

The fundamental idea of partial evaluation is quite simple.
Suppose we are given a program and
some of the expected inputs to it are known in advance. Then
certain portions of the program may be
\emph{precomputed} w.r.t.\ the known inputs, by which
the source program is transformed to an optimized one:
The transformed program contains
fewer computation steps to be performed at run-time.
A subpart of the program is called
\keyword{static}, if it does not depend on the inputs
to be given at run-time; Otherwise, it is called
\keyword{dynamic}. In particular, the known inputs are called
static and the remaining inputs are called dynamic.
It is the task of \emph{binding-time analysis} to identify
static parts as larger as possible for the chance of better optimization.

Let us see how a simple program that computes exponentiation
$n^m$ for non-negative integers $n$ and $m$ can be
optimized by partial evaluation.\unskip\footnote{%
  This is the typical example that first appears in introductory
  texts of partial evaluation.} %
Such a program would be simply defined in a recursive style, as follows:
\begin{align*}
  \mathrm{expt}(n,m) ~\equiv\quad &
  \textproc{if}~m=0~\textproc{then}~\textproc{return}~1~%
  \textproc{else}~\textproc{return}~n\times\mathrm{expt}(n,m-1).
\end{align*}
Suppose the second input $m$ is known $9$.
Instantiating $m$ with $9$, we get:
\begin{align*}
  \mathrm{expt}(n,\underline{9}) ~\equiv\quad &
  \textproc{if}~\underline{9=0}~\textproc{then}~\textproc{return}~\underline{1}~%
  \textproc{else}~\textproc{return}~n\times\mathrm{expt}(n,\underline{9-1}),
\end{align*}
where the underlined parts are the static ones, which are identified
by binding-time analysis. Evaluating the static subexpressions,
we obtain
\begin{align*}
  \mathrm{expt}(n,\underline{9}) ~\equiv\quad &
  \textproc{if}~\underline{\mathit{false}}~\textproc{then}~\textproc{return}~\underline{1}~%
  \textproc{else}~\textproc{return}~n\times\mathrm{expt}(n,\underline{8}).
\end{align*}
Pruning the unreachable branch and unfolding the recursive
call $\mathrm{expt}(n,\underline{8})$, we get
\begin{align*}
  \mathrm{expt}(n,\underline{9}) ~\equiv\quad &
  n\times \bigl(\textproc{if}~\underline{8=0}~\textproc{then}~\textproc{return}~\underline{1}~%
  \textproc{else}~\textproc{return}~n\times\mathrm{expt}(n,\underline{8-1}) \bigr),
\end{align*}
which reveals new static parts subject to further partial evaluation. Repeating this process,
we will obtain the final transformation result:
\begin{align*}
  \mathrm{expt}(n,\underline{9}) ~\equiv\quad &
  n\times n\times n\times n\times n\times n\times n\times n\times n.
\end{align*}

Though the above simple example of partial evaluation improves
the source program only marginally, just
removing the overhead involved
in conditional branching and recursive calls,
the effect of optimization is amplified by applying it
where execution bottleneck exists. In this
paper, we are specifically concerned with application to the shared memory bottleneck.
Another strength of partial evaluation is that the whole transformation process
is mechanizable: Once we write a simple
program, whose correctness is easier to reason about,
we may automatically obtain one that is still correct yet optimized.

Optimization by partial evaluation, or program transformation
by specialization in general,
rarely improves computational complexity.
In most cases, it improves efficiency only by a constant factor.
As for the example of exponentiation, for instance,
it is well known that an algorithm of logarithmic complexity
is obtained by the technique of repeated squaring \cite{IntroAlgo:Cormen}.
However, this kind of algorithmic leap usually needs
human insights on the mathematical structure
behind the problem to be solved.
It is a central topic
of program transformation how to optimize programs
semi-automatically ---
mostly by mechanical `calculation' on programs but
with a little exploitation of the mathematical
structure behind each particular problem.
There has been lots of work done in this direction
and several illuminating examples can be found in \cite{Bird:PearlsFunAlgo}.

\end{document}